\def\BState{\State\hskip-\ALG@thistlm}
\newcommand{\REMOVE}[1]{}%\textcolor[rgb]{.8,0,0}{Remove for short version: #1}}
\newcommand{\eps}{\varepsilon}
\renewcommand{\phi}{\varphi}
\newcommand{\dist}{\textsf{dist}}
\newcommand{\diam}{\textsf{diam}}
\newcommand{\proj}{\textsf{proj}}
\newcommand{\per}{\textsf{per}}
\newcommand{\opt}{\textsf{OPT}}
\newcommand{\R}{\mathbb{R}}
\newcommand{\cut}{\delta_{\operatorname{cut}}}
\newcommand{\hausdorff}{\delta_{\operatorname{H}}}
\newcommand{\rock}{\delta_{\operatorname{rock}}}
\newcommand{\station}{\delta_{\operatorname{station}}}
\newcommand{\frechet}{\delta_{\operatorname{F}}}
\renewcommand{\emph}{\textbf}
\theoremstyle{plain}
\newtheorem{theorem}{Theorem}
\newtheorem{lemma}[theorem]{Lemma}
\newtheorem{definition}[theorem]{Definition}
\newtheorem{obs}{Observation}
\newcommand{\ShoLong}[2]{#2}
\title{Rock Climber Distance: Frogs versus Dogs\thanks{Research supported in part by the NSF awards CCF-1422311 and CCF-1423615.}}
\author{
Hugo A. Akitaya\thanks{Department of Computer Science, Tufts University, Medford, MA, USA.}
\and
Leonie Ryvkin\thanks{Department of Mathematics, Ruhr University Bochum, Germany}
\and
Csaba D. T\'oth\footnotemark[2]~\thanks{Department of Mathematics, California State University Northridge, Los Angeles, CA, USA.}
%\and
%\textcolor{red}{Add your name here}\footnotemark[2]
}
\date{}
\begin{document}
\maketitle
%\linenumbers

\begin{abstract}
The classical measure of similarity between two polygonal chains in Euclidean space is the Fr\'echet distance, which corresponds to the coordinated motion of two mobile agents along the chains while minimizing their maximum distance. As computing the Fr\'echet distance takes near-quadratic time under the Strong Exponential Time Hypothesis (SETH), we explore two new distance measures, called \emph{rock climber distance} and \emph{$k$-station distance}, in which the agents move alternately in their coordinated motion that traverses the polygonal chains. We show that the new variants are equivalent to the Fr\'echet or the Hausdorff distance if the number of moves is unlimited. When the number of moves is limited to a given parameter $k$, we show that it is NP-hard to determine the distance between two curves. We also describe a 2-approximation algorithm to find the minimum $k$ for which the distance drops below a given threshold.
\end{abstract}

\section{Introduction}

Recognizing similarity between geometric objects is a classical problem in pattern matching, and has recently gained renewed attention due to its applications in  artificial intelligence and robotics. Statistical methods and the Hausdorff distance have proved to be good similarity measures for static objects, but are insensitive to spatio-temporal data, such as individual trajectories or clusters (flocks) of trajectories. The \emph{Fr\'echet distance} (defined below) is considered to be one of the best similarity measures between curves in space. Between two polygonal chains with a total of $n$ vertices, the Fr\'echet distance can be computed in $O(n^2\ {\rm polylog}\ n)$ time~\cite{altgodau,Buchin2017}. Under the Strong Exponential Time Hypothesis (SETH), there is a lower bound of $\Omega(n^{2-\delta})$, for any $\delta>0$, for computing the Fr\'echet distance~\cite{Bringmann}, or even approximating it within a factor of 3~\cite{DBLP:conf/soda/BuchinOS19}. Without SETH, the current best lower bound for the time complexity under the algebraic decision tree model is $\Omega(n\log n)$ \cite{walkdog}.

Applications, however, call for efficient algorithms for massive trajectory data. This motivates the quest for new variants of the Fr\'echet distance that may bypass some of its computational bottlenecks but maintain approximation guarantees.

In this paper, we introduce the \emph{rock climber} distance. It combines
properties of the continuous and the discrete Fr\'echet distance, and is closely related to the recently introduced $k$-Fr\'echet distance~\cite{DBLP:journals/corr/abs-1903-02353}. 
The classic \emph{Fr\'echet distance} corresponds to coordinated motion, where two agents follow the polygonal paths $P$ and $Q$, so that they minimize the maximum distance between the agents (intuitively, the agents are a man and a dog, and they minimize the length of the leash between them). The \emph{discrete Fr\'echet distance} considers discrete motion on the vertices of the two chains (i.e., walking a frog~\cite{SarielBook}, pun intended). The \emph{rock climber distance} corresponds to a coordinated motion of two agents along $P$ and $Q$ that is continuous, but only one agent moves at a time, hence it can be described by an axis-parallel path in a suitable parameter space (the so-called free space diagram, described below). 

\paragraph{Definitions}
Given two polygonal chains, parameterized by piecewise linear curves, 
$P: [0,1] \to \R^2$ and $Q: [0,1] \to \R^2$, 
the Hausdorff distance is defined as  
\ShoLong{

\vspace{-\baselineskip}
\begin{align*}
    \hausdorff(P,Q) =\max &\{ 
\max_{s\in [0,1]} \min_{t\in [0,1]} \|P(s)-Q(t)\|, \\
& \max_{t\in [0,1]} \min_{s\in [0,1]} \|P(s)-Q(t)\|
\}.
\end{align*}
}{
\[
\hausdorff(P,Q)
=\max\{ 
\max_{s\in [0,1]} \min_{t\in [0,1]} \|P(s)-Q(t)\|,
\max_{t\in [0,1]} \min_{s\in [0,1]} \|P(s)-Q(t)\|
\}.
\]
}
and the Fr\'echet distance is defined as
\[
\frechet(P,Q)=\inf_{\sigma,\tau} \max_{t \in [0,1]} \lVert P(\sigma(t))-Q(\tau(t)) \rVert,
\]

where $\sigma,\tau:[0,1] \to [0,1]$ range over all orientation-preserving homeomorphisms of $[0,1]$. 
%Note that it suffices to reparameterize one of the curves. 
%\csaba{While it is enough to reparametersize one of the curves, it is better
% to allow reparameterization for both, to be consistent with the remainder of the section.}
The standard machinery for finding nearby points in the two polygonal chains, introduced by Alt and Godau~\cite{altgodau} uses the so-called \emph{free space diagram}. For every $\eps>0$, the \emph{free space} is defined as 
\[
F_\eps(P,Q)=\{(s,t)\in [0,1]^2 \colon \| P(s)-Q(t) \| \leq \eps\}.
\]
Note that $F_\eps(P,Q)\subset [0,1]^2$, where a point $(s,t)\in [0,1]^2$ corresponds to the positions $P(s)$ and $Q(t)$ on the two chains. 
%That is, the unit square $[0,1]^2$  is the Cartesian product of the \emph{parameter spaces} of $P$ and $Q$, respectively.
The Fr\'echet distance between $P$ and $Q$ is at most $\eps$ if and only if
the free space contains a strictly $x$- and $y$-monotone path from $(0,0)$ to $(1,1)$;
namely, $\gamma:[0,1]\rightarrow [0,1]^2$, $\gamma(t)=(\sigma(t),\tau(t))$.

We define further terms connected to the free space diagram below: A \emph{component} of a free space diagram is a connected subset $c\subseteq F_\varepsilon(P,Q)$.
A set $S$ of components \emph{covers} a set $I\subseteq [0,1]_P$ of the parameter space (corresponding to the curve $P$) if $I$ is a subset of the projection of $S$ onto said parameter space, i.e., $\forall x\in I \colon \exists c\in S, y\in [0,1]_Q \colon (x,y)\in c$.
Covering on the second parameter space is defined analogously.

\paragraph{Rock Climbers Distance.} 
Assume that two rock climbers each choose a route on a vertical wall, represented by polygonal chains $P$ and $Q$. They secure each other with a rope: While one endpoint of the rope is  firmly attached to the rock, the other endpoint may move. Both climbers must be secured at all times, and so only one climber can move at a time. The \emph{rock climber distance} is the minimum length of a rope that allows them to traverse the routes $P$ and $Q$, that is, 
\begin{equation}
\rock(P,Q)=\inf_{\gamma} \max_{t \in [0,1]} \lVert P(\sigma(t))-Q(\tau(t)) \rVert,\label{eq:rock}
\end{equation}
where $\gamma:[0,1]\rightarrow [0,1]^2$, $\gamma(t)=(\sigma(t),\tau(t))$, ranges over all $x$- and $y$-monotonically increasing axis-parallel paths from $(0,0)$ to $(1,1)$. 
%(In particular, both $\sigma:[0,1]\rightarrow [0,1]$ and $\tau:[0,1]\rightarrow [0,1]$ are monotonically increasing surjective functions defined piecewise such that one of them is constant on each piece.)

We show that $\rock(P,Q)=\frechet(P,Q)$ (cf.~Theorem~\ref{thm:feasible1}), albeit the number of turns of the path $\gamma$ may far exceed the number of vertices of $P$ and $Q$. 
This indicates that the number of axis-parallel segments in $\gamma$ is a crucial parameter.
For every $k\in \mathbb{N}$, we define $\rock(k,P,Q)$ by equation~\eqref{eq:rock} with the additional condition that the path $\gamma$ consists of at most $k$ line segments. 

\paragraph{Rock Climber Distance with $k$ Stations.}
The main focus of this paper is a variant of the rock climber distance, where 
the number of axis-parallel segments is a fixed parameter $k$, but these segments need not form a continuous path from $(0,0)$ to $(1,1)$.
Assume that a rock climber club decides to install permanent safety ropes along the routes $P$ and $Q$ for training purposes. Each rope has one fixed endpoint on $P$ or $Q$, and its other endpoint can move freely on some subcurve of the other polygonal chain ($Q$ or $P$, respectively). The mobile endpoint of a rope, however, cannot pass through the fixed endpoint of another rope. The club decides to install $k\in \mathbb{N}$ identical ropes: What is the minimum length of a rope that allows safe traversal on both $P$ and $Q$? More formally, we arrive at the following definition.

\begin{definition}
For two polygonal chains, $P$ and $Q$, and an integer $k\in \mathbb{N}$, 
the \emph{$k$-station distance}, denoted $\station(k,P,Q)$, 
is the infimum of all $\eps>0$ such that 
there exist two subdivisions $0=a_0<a_1<\ldots <a_p=1$ and $0=b_0<b_1<\ldots <b_q=1$
into a total of $p+q=k$ intervals such that 
\[\min_{j\in \{1,\ldots , q\}}
\min_{s\in [a_{i-1},a_i]}
\|P(s)-Q(b_j)\| \leq \eps
\mbox{ \rm for }
i=1,\ldots , p;
\]
\[\min_{i\in \{1,\ldots , p\}}
\min_{t\in [b_{j-1},b_j]}
\|P(a_i)-Q(t)\| \leq \eps
\mbox{ \rm for }
j=1,\ldots , q.
\]
\end{definition}

Every subcurve $P[a_{i-1},a_i]$ of $P$ has some closest point $Q(b_{j(i)})$ in $Q$; and 
every subcurve $Q[b_{j-1},b_j]$ of $Q$ has a closest point $P(b_{i(j)})$ in $P$. 
In the free~space diagram $F_\eps(P,Q)$, where $\eps=\station(k,P,Q)$, the union of   
horizontal segments $[a_{i-1},a_i]\times \{b_{j(i)}\}$ and 
vertical segments $\{a_{i(j)}\}\times [b_{j-1},b_j]$
projects surjectively to the unit interval $[0,1]$ on each coordinate axis.

\paragraph{Fr\'echet Distance with $k$ Jumps.}
The $k$-station distance can also be considered as a variant of the \emph{$k$-Fr\'echet distance}, introduced by Buchin and Ryvkin~\cite{eurocg} (see also~\cite{DBLP:journals/corr/abs-1903-02353}). 
Intuitively, it measures the similarity between two polygonal chains after  $k$ ``mutations.'' Formally, $\cut(k,P,Q)$ is the infimum of $\eps>0$ such that $P$ and $Q$ can each be subdivided into $k$ subcurves, $P_i$ and $Q_i$ ($i=1,\ldots ,k)$,  where $\frechet(P_i,Q_{\pi(i)})\leq \eps$ for some permutation $\pi:[k]\rightarrow [k]$.
Importantly, the chains $P$ and $Q$ can be subdivided at any point, not only at vertices. 
Determining the minimum $k\in \mathbb{N}$ for which $\cut(k,P,Q)\leq \eps$
for a given $\eps$ is NP-hard, and conjectured to be $\exists \mathbb{R}$-hard. 
% since it is particularly difficult to find algorithmic approaches. 
% \csaba{This might not be sufficient corroboration for the conjecture.}
%
The $k$-station distance can be considered as a restricted version of the $k$-Fr\'echet distance, where either $P_i$ or $Q_{\pi(i)}$ is required to be a single point
(i.e., a trivial curve) for $i=1,\ldots ,k$. By definition, we have 
$\cut(k,P,Q)\leq \station(k,P,Q)$ for all $k\in \mathbb{N}$.

\paragraph{Unit Disk Cover (UDC).}
The rock climber $k$-station distance is also reminiscent of the unit disk cover problem: Given a point set $S\subset \mathbb{R}^2$, find a minimum set $\mathcal{D}$ of unit disks
such that $S\subset \bigcup \mathcal{D}$. When $S$ is finite, UDC is known to be NP-hard~\cite{DBLP:journals/ipl/FowlerPT81}, one can find a 4-approximation in $O(n\log n)$ time~\cite{DBLP:journals/comgeo/BiniazLMS17}. In the \emph{Discrete Unit Disk Cover} problem, $S$ is finite, and the disks are restricted to a finite set of possible centers~\cite{DBLP:journals/jda/BasappaAD15}; the discretized version admits a PTAS via local search~\cite{DBLP:journals/dcg/MustafaR10,DBLP:journals/dcg/RoyGRR18}. These results extend to the cases where $S$ is confined to a narrow strip~\cite{DBLP:journals/tcs/FraserL17}, or $S$ is a finite union of line segments~\cite{DBLP:conf/caldam/Basappa18}.
Finding the minimum $k\in \mathbb{N}$ such that $\station(k,P,Q)\leq 1$
can be considered as a variant of UDC, where $P$ (resp., $Q$) must be covered by disks centered at $Q$ (resp., $P$), and each disk can cover at most one contiguous arc of a curve. 

\paragraph{Our Results.} 
\ShoLong{We}{In this paper, we} 
prove the following results.
\begin{enumerate}
    \item We show that $\rock(P,Q)=\frechet(P,Q)$ and $\station(P,Q)=\hausdorff(P,Q)$
    for a sufficiently large $k$ (that depends on $P$ and $Q$). It follows that for any two polygonal chains, $P$ and $Q$, there exists a positive integer $k$ such that $\cut(k,P,Q)\leq \frechet(P,Q)$. The first identity implies that $\rock(P,Q)$ can be computed in $O(n^2\sqrt{\log n}(\log\log)^{3/2})$ time~\cite{Buchin2017},
    where $P$ and $Q$ jointly have $n$ vertices
    %; we also show that $\rock(k,P,Q)$ can be computed in $O(n^2\log n)$ time for any $k\in \mathbb{N}$ using rectilinear link distances~\cite{DBLP:journals/comgeo/MitchellPS14,DBLP:journals/algorithmica/MitchellPSW19} 
    (Section~\ref{sec:feasibility}).
    \item We prove that it is NP-complete to decide whether $\station(k,P,Q)\leq \eps$ for two given polygonal chains, $P$ and $Q$, and parameters $k$ and $\eps>0$ (Section~\ref{sec:hard}). 
    \item We also give a 2-approximation algorithm for finding the minimum $k\in \mathbb{N}$ such that $\station(P,Q,k)\leq \eps$ for given polygonal chains $P$ and $Q$, and a threshold $\eps>0$. We reduce the problem to a variant of the set cover problem over axis-parallel line segments, for which a greedy strategy yields a 2-approximation (Section~\ref{sec:apx}). 
\end{enumerate}

\paragraph{Further Related Previous Work.}
Alt, Knauer, and Wenk~\cite{knauer} compared the Hausdorff to the Fr\'echet distance and discussed $\kappa$-bounded curves as a special input instance. In particular, they showed that for convex closed curves Hausdorff distance equals Fr\'echet distance. 
For curves in one dimension Buchin et al.~\cite{walkdog} proved equality of Hausdorff and weak Fr\'echet distance using the well-known Mountain Climbing theorem~\cite{mountain}. 
Recently, Driemel et al.~\cite{DBLP:journals/corr/abs-1903-03211} gave bounds on the VC-dimension of curves under Hausdorff and Fr\'echet distances. 
%For computing the Hausdorff distance, Alt et al.~\cite{hausdorff} gave a thorough overview. 
Buchin~\cite{b-cfdts-07} characterized these measures in terms of the free space, which motivated the study of the variants of the $k$-Fr\'echet distance; see also Har-Peled and Raichel~\cite{DBLP:journals/talg/Har-PeledR14} for a treatment using product spaces.
The $k$-station distance is also related to partial curve matching, studied by Buchin, Buchin, and Wang~\cite{partialcurve}, who  presented a polynomial-time algorithm to compute the ``partial Fr\'echet similarity.'' A variation of this similarity was considered by Scheffer~\cite{scheffer}.

%Driemel et al.~\cite{DBLP:journals/corr/abs-1903-03211}.

\section{Relations to Other Distance Measures}
\label{sec:feasibility}

In this section, we compare the rock climber distance and the $k$-station distance to the Fr\'echet and Hausdorff distances, as well as the cut distance. 

\paragraph{Preliminaries.}
Let $P:[0,1]\rightarrow \mathbb{R}^2$ and $Q:[0,1]\rightarrow \mathbb{R}^2$ two piecewiese linear curves. That is, there are subdivisions $0=a_0<a_1<\ldots <a_m=1$ and $0=b_0<b_1<\ldots <b_n=1$ such that $P, Q$ are linear on each subinterval $[a_{i-1},a_i]$ and $[b_{j-1},b_j]$, respectively.  Recall that for every $\eps>0$, the free space is defined as 
$F_\eps(P,Q)=\{(s,t)\in [0,1]^2 \colon \| P(s)-Q(t) \| \leq \eps\}$,
which is a subset of the configuration space $U=[0,1]^2$. We 
can subdivide $U$ into $mn$ \emph{cells} of the form $C_{i,j}=[a_{i-1},a_i]\times [b_{j-1},b_j]$, for $i=1,\ldots, m$ and $j=1,\ldots , n$. 
It is known that $C_{i,j}\cap F_\eps(P,Q)= C_{i,j}\cap E_{i,j}$, where 
$E_{i,j}$ is either an ellipse or a slab parallel to the diagonal of $C_{i,j}$
(in case $P([a_{i-1},a_i])$ and $Q([b_{j-1},b_k])$
are parallel line segments).

\paragraph{Geometric Properties.} We prove a few elementary properties for monotone curves passing through a cell of the free space diagram. We start with an easy observation.
\ShoLong{Omitted proofs are available in the Appendix.}{}

\begin{lemma}\label{lem:ellpise}
Let $E$ be an ellipse with maximal curvature $\kappa$.
Then for every point $p\in \partial E$, there are horizontal and vertical segments 
$H_p$ and $V_p$, respectively, such that $p\in H_p\subset E$, $p\in V_p\subset E$,
and $\|H_p\|+\|V_p\|\geq 2/\kappa$.
\end{lemma}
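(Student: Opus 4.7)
The plan is to reduce to the analogous statement for a disk: inscribe, at each boundary point $p\in\partial E$, a disk $D_p$ of radius $1/\kappa$ tangent to $\partial E$ at $p$ from inside, and then read off the conclusion from an elementary chord-length computation on $D_p$. Concretely, if $\vec{n}$ denotes the inward unit normal to $\partial E$ at $p$, I would define $D_p$ as the closed disk of radius $1/\kappa$ centered at $p+\vec{n}/\kappa$, so that $\partial D_p$ touches $\partial E$ at $p$ with matching normal.

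The central step is to verify that $D_p\subseteq E$. This is the \emph{rolling disk} phenomenon: on a smooth convex curve whose curvature is everywhere bounded above by $\kappa$, a disk of radius $1/\kappa$ can be rolled along the inside of the curve while remaining inscribed. For an ellipse one can verify this directly using the principal-axes parametrization $p(\theta)=(a\cos\theta, b\sin\theta)$, $a\ge b$, $\kappa=a/b^2$: expanding $\|p(\theta)+(1/\kappa)\vec{n}(\theta)-p(\theta')\|^2\ge (1/\kappa)^2$ and using $1/\kappa\le R(\theta')$ (the pointwise radius of curvature $R(\theta)=(a^2\sin^2\theta+b^2\cos^2\theta)^{3/2}/(ab)$) reduces the claim to an inequality between trigonometric polynomials in $\theta,\theta'$. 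I expect this global verification to be the main technical obstacle; locally (Taylor-expanding in $\theta'-\theta$) the inequality follows immediately from $1/\kappa\le R(\theta)$.

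Once $D_p\subseteq E$ is in hand, I would take $H_p$ and $V_p$ to be the horizontal and vertical chords of $D_p$ passing through $p$; since $D_p\subseteq E$, both are horizontal/vertical segments contained in $E$ and containing $p$, as required. Letting $\beta$ denote the angle between $\vec{n}$ and the horizontal axis, the elementary chord-length formula for a disk---a chord through a boundary point of a disk of radius $r$ in a direction making angle $\gamma$ with the outward normal has length $2r|\cos\gamma|$---gives $\|H_p\|=(2/\kappa)|\cos\beta|$ and $\|V_p\|=(2/\kappa)|\sin\beta|$. Summing and invoking $|\cos\beta|+|\sin\beta|\ge 1$ yields $\|H_p\|+\|V_p\|\ge 2/\kappa$, which is the stated bound.
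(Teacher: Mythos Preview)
Your approach is essentially the paper's own: inscribe a disk of radius $1/\kappa$ through $p$ inside $E$ and take its horizontal and vertical chords through $p$. The paper simply asserts the existence of such a disk (your ``rolling disk'' fact) without justification, and phrases the final bound via Thales' theorem and the triangle inequality for the right triangle with hypotenuse $2/\kappa$ rather than your explicit $|\cos\beta|+|\sin\beta|\ge 1$, but these are the same computation.
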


\global\def \proofLemEllipse {
For every point $p\in \partial E$, there is a disk $D_p$ of radius $\frac{1}{\kappa}$ such that $p\in D_p\subset E$. Let $H_p$ and $V_p$, respectively, be the maximal horizontal and vertical segments that lie in $D_p$ and contain $p$. Since $H_p$ and $V_p$ are orthogonal, they form a right triangle with hypotenuse $\diam(D_p)=2/\kappa$. The triangle inequality yields $\|H_p\|+\|V_p\|\geq 2/\kappa$.
}

\ShoLong{}{
\begin{proof}
\proofLemEllipse
\end{proof}
}

\begin{lemma}\label{lem:monotone}
Let $C$ be an axis-aligned rectangle and $E$ an ellipse such that $C\cap E\neq\emptyset$.
Let $\alpha:[0,1]\rightarrow C\cap E$ be an $x$- and $y$-monotone increasing curve.
Then there exists an $x$- and $y$-monotone increasing curve $\beta:[0,1]\rightarrow C\cap E$ such that $\beta(0)=\alpha(0)$, $\beta(1)=\alpha(1)$, and (the image of) $\beta$ is 
a polygonal chain consisting of a finite number of axis-parallel edges.
\end{lemma}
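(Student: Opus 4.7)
The set $K=C\cap E$ is compact and convex, being the intersection of two compact convex sets. My plan is to exploit this convexity in the following way: for any two points $p\le q$ in $K$ (compared coordinate-wise), the straight segment $\overline{pq}$ lies in $K$, and whenever one of the two corners of the bounding box of $\{p,q\}$ lies in $K$, the two axis-parallel edges meeting at that corner lie in $K$ as well. I will therefore replace $\alpha$ by a finite concatenation of such L-shapes along a sufficiently fine subdivision of $\alpha$.

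By uniform continuity of $\alpha$ on $[0,1]$, I pick a subdivision $0=t_0<t_1<\cdots<t_N=1$ so that each piece $\alpha([t_{i-1},t_i])$ has diameter at most $\delta$, for a small $\delta>0$ to be fixed. Writing $p_i=\alpha(t_i)=(x_i,y_i)$, monotonicity gives $p_{i-1}\le p_i$ coordinate-wise. On each piece I connect $p_{i-1}$ to $p_i$ by an axis-parallel L-path through one of the two candidate corners $r_1=(x_i,y_{i-1})$ or $r_2=(x_{i-1},y_i)$, selecting whichever lies in $K$; by convexity of $K$, both axis-parallel segments of the chosen L-path then lie in $K$. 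Concatenating these $N$ L-paths yields a monotone axis-parallel polygonal chain $\beta$ from $\alpha(0)$ to $\alpha(1)$ with at most $2N$ edges, which is the desired $\beta$.

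The technical step is to verify, for $\delta$ small enough, that at least one of the two corners lies in $K$ for every $i$. If $\alpha([t_{i-1},t_i])$ is contained in the interior of $K$, then for $\delta<\dist(\alpha([t_{i-1},t_i]),\partial K)$ both corners belong to $K$. Otherwise $\alpha$ touches $\partial K$, which decomposes into axis-parallel edges inherited from $\partial C$ and elliptic arcs inherited from $\partial E$. On an axis-parallel edge of $\partial C$, monotonicity of $\alpha$ together with $C$ being a bounding rectangle pins one coordinate of $\alpha$, so $p_{i-1}$ and $p_i$ already share that coordinate and the L-path collapses to a single axis-parallel segment contained in the edge. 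On an elliptic arc of $\partial E$, the convex side of the arc with respect to $E$ always contains exactly one of $r_1,r_2$, and Lemma~\ref{lem:ellpise} supplies horizontal and vertical slack of total length at least $2/\kappa$ at the touching point; choosing $\delta$ much smaller than this slack forces the selected corner to lie in $E$ and, by the same smallness, also in $C$, hence in $K$.

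The main obstacle I expect is precisely this last case: carefully matching the direction of the L-path to the interior side of the elliptic arc being traced, and controlling $\delta$ globally (via compactness, by taking the minimum of the local bounds along $\alpha$). Once this bookkeeping is carried out, concatenation of the L-paths produces the desired monotone axis-parallel polygonal chain $\beta$ in $C\cap E$ with the same endpoints as $\alpha$.
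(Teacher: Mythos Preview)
Your approach is genuinely different from the paper's. The paper never follows $\alpha$ at all: after reducing (via axis-parallel cuts through the extreme points of $E$ and through $\alpha(0),\alpha(1)$) to the situation where $\alpha(0)$ and $\alpha(1)$ are opposite corners of $C$, it builds $\beta$ greedily by repeatedly appending a maximal horizontal or vertical segment inside $C\cap E$, and uses Lemma~\ref{lem:ellpise} only to bound the number of steps by roughly $\kappa\cdot\per(C)$. So the paper trades your local $L$-step construction for a global staircase that ignores the shape of $\alpha$ entirely. Your method has the virtue of staying close to $\alpha$ and giving an explicit bound of $2N$ edges; the paper's method avoids the uniform-$\delta$ issue altogether.

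That uniform-$\delta$ issue is a real gap in your write-up. Your case split (``interior of $K$'' versus ``touching $\partial C$'' versus ``touching an arc of $\partial E$'') does not glue into a single $\delta$: the interior bound $\dist(\alpha([t_{i-1},t_i]),\partial K)$ degenerates to $0$ as the piece approaches $\partial K$, so ``minimum of the local bounds by compactness'' may well be $0$. Also, the $\partial C$ case is both wrong as stated ($\alpha$ can touch an edge of $C$ at a single point without a coordinate being pinned) and unnecessary: since $C$ is an axis-aligned rectangle and $p_{i-1}\le p_i$ coordinatewise, both candidate corners $r_1=(x_i,y_{i-1})$ and $r_2=(x_{i-1},y_i)$ lie in $C$ automatically. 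The only obstruction is membership in $E$.

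The fix is to prove a single uniform statement about $E$: there is a $\delta_0>0$ depending only on the ellipse (for instance, of order the minor-axis length, or $1/\kappa$) such that whenever $p,q\in E$ with $p\le q$ and $|p-q|<\delta_0$, at least one of $r_1,r_2$ lies in $E$. This is true---for a disk it holds for every $\delta$, since $|r_1|^2=|q|^2-(y_q^2-y_p^2)$ and $|r_2|^2=|p|^2+(y_q^2-y_p^2)$ show that one of them is at most $1$; for a genuine ellipse one checks that the defect is at most a constant times $\Delta x\,\Delta y$, which is $O(\delta^2)$---but it needs to be argued directly, not via a pointwise case analysis along $\alpha$. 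Once you have this uniform $\delta_0$, your concatenated $L$-paths give the desired $\beta$.
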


\global\def \proofLemMonotone {
Note that every axis-parallel line passing through the interior of $C$ subdivides $C$ into two axis-aligned rectangles; and every axis-parallel line passing through an interior point of $\alpha$ subdivides $\alpha$ into two $x$- and $y$-monotone curves. It is enough to prove the claim in each cell of a finite arrangement of axis-parallel lines.

The axis-parallel lines passing through the four extreme points of $E$ (i.e., the leftmost, rightmost, lowest, and highest points) subdivide $\partial E$ into $x$- and $y$-monotone arcs. Assume without loss of generality that these lines do not intersect the interior of $C$. Further assume, by subdividing along the axis-parallel lines passing through the endpoints of $\alpha$, that $\alpha(0)$ and $\alpha(1)$, respectively, are the lower-left and upper-right corner of $C$. 
Note that both $C$ and $E$ are convex, hence $C\cap E$ is convex. If the upper-left or the lower-right corner of $C$ is in $E$, then the the two adjacent sides of $C$ are in $C\cap E$, and form an axis-parallel path with two edges from the lower-left to the upper right corner of $C$ 

Assume that neither the upper-left nor the lower-right corner of $C$ is in $E$. Construct an $x$- and $y$-monotone increasing curve $\beta:[0,1]\rightarrow C\cap E$ from the lower-left to the upper right corner of $C$ greedily as follows: Start the path from the lower-left corner $p_0$, and alternately append maximal horizontal and vertical segments in $E\cap C$ to the current endpoint until reaching the upper right corner. By Lemma~\ref{lem:ellpise}, 
the combined length of any two consecutive edges, excluding the first and last two edges, 
is at least $2/\kappa$, where $\kappa>0$ is a constant that depends only on $E$. It follows that the path reaches the upper right corner within at most $\kappa\cdot \per(C)+4$ iterations.
}

\ShoLong{}{
\begin{proof}
\proofLemMonotone
\end{proof}
}

For a set $S\subset \mathbb{R}^2$, let $\proj_x(S)$ and $\proj_y(S)$ denote the orthogonal projection of $S$ onto the $x$- and the $y$-axis, respectively. 

\begin{lemma}\label{lem:cover}
Let $C$ be an axis-aligned rectangle and $E$ an ellipse such that $C\cap E\neq\emptyset$.
Then there exists a finite set $\mathcal{S}$ of axis-parallel line segments 
in $C\cap E$ such that 
$\proj_x(C\cap E)=\proj_x(\bigcup \mathcal{S})$ and 
$\proj_y(C\cap E)=\proj_y(\bigcup \mathcal{S})$.
\end{lemma}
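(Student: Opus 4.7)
The idea is to build $\mathcal{S}$ from four axis-parallel polygonal chains, one traced along each monotone arc of $\partial(C\cap E)$. Since $C\cap E$ is compact and convex, let $[x_1,x_2]=\proj_x(C\cap E)$ and $[y_1,y_2]=\proj_y(C\cap E)$, and pick points $p_L,p_R,p_B,p_T\in C\cap E$ attaining the $x$-minimum, $x$-maximum, $y$-minimum, and $y$-maximum, respectively. By convexity, the four boundary arcs between consecutive extreme points (counterclockwise: from $p_B$ to $p_R$, from $p_R$ to $p_T$, from $p_T$ to $p_L$, and from $p_L$ to $p_B$) are each monotone in both coordinates, with signs determined by the quadrant.

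The arc from $p_B$ to $p_R$ is $x$- and $y$-monotone increasing and lies inside $C\cap E$, so Lemma~\ref{lem:monotone} yields a finite axis-parallel polygonal chain $\beta_{BR}\subset C\cap E$ joining $p_B$ to $p_R$. Any axis-parallel reflection sends ellipses to ellipses, axis-aligned rectangles to axis-aligned rectangles, and preserves axis-parallel segments; so applying a suitable horizontal, vertical, or combined reflection reduces each of the remaining three arcs to the monotone-increasing case, and Lemma~\ref{lem:monotone} produces axis-parallel polygonal chains $\beta_{RT},\beta_{TL},\beta_{LB}\subset C\cap E$ between the corresponding endpoints.

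Let $\mathcal{S}$ be the finite collection of horizontal and vertical edges making up $\beta_{BR}\cup\beta_{RT}\cup\beta_{TL}\cup\beta_{LB}$. Because each $\beta$ is monotone in both coordinates, its $x$-projection equals the $x$-interval spanned by its endpoints, and similarly for $y$. Writing $x_B=\proj_x(p_B)$ and $x_T=\proj_x(p_T)$, the four $x$-projections union to $[x_1,x_B]\cup[x_B,x_2]\cup[x_T,x_2]\cup[x_1,x_T]=[x_1,x_2]$, and symmetrically the $y$-projections union to $[y_1,y_2]$. Hence $\proj_x(\bigcup\mathcal{S})=\proj_x(C\cap E)$ and $\proj_y(\bigcup\mathcal{S})=\proj_y(C\cap E)$, as required.

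The main obstacle I anticipate is bookkeeping for degenerate configurations: two extreme points may coincide (e.g., $p_L=p_B$ when $C\cap E$ meets a corner of $C$), or an extreme value may be attained on a flat portion of $\partial(C\cap E)$ lying inside a side of $C$. In every such case, arbitrary tie-breaking collapses some arcs to single points or to axis-parallel segments already in $C\cap E$; the remaining non-degenerate arcs still feed into Lemma~\ref{lem:monotone}, and the interval identities above continue to hold.
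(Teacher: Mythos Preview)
Your argument is correct and follows the same overall strategy as the paper (pick extreme points, feed monotone curves into Lemma~\ref{lem:monotone}, collect the resulting axis-parallel edges), but the paper takes a shorter route. Rather than tracing all four boundary arcs of $C\cap E$, the paper uses just two \emph{chords}: the segment $ab$ from a leftmost point $a$ to a rightmost point $b$, and the segment $cd$ from a lowest to a highest point. By convexity $ab,cd\subset C\cap E$, and trivially $\proj_x(ab)=\proj_x(C\cap E)$ and $\proj_y(cd)=\proj_y(C\cap E)$. Since any straight segment is monotone in both coordinates, a single application of Lemma~\ref{lem:monotone} (with a reflection if the segment happens to be $y$- or $x$-decreasing) replaces each chord by a finite axis-parallel staircase with the same projections. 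So two invocations of Lemma~\ref{lem:monotone} suffice instead of your four, and no boundary-arc monotonicity argument or case analysis over quadrants is needed. Your version, on the other hand, makes the degenerate cases slightly more transparent, since collapsing arcs to points is visually obvious, whereas in the paper's version one should note that a degenerate chord (e.g., $a=b$) simply means the corresponding projection is a single point.
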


\global\def \proofLemCover {
\begin{proof}
Let $a$, $b$, $c$, and $d$, respectively, be a leftmost, rightmost, lowest, and highest point in $C\cap E$. By convexity, we have $ab,cd\subset C\cap E$. Note that $\proj_x(C\cap E)=\proj_x(ab)$ and $\proj_y(C\cap E)=\proj_y(cd)$. The segments $ab$ and $cd$ yield $x$- and $y$-monotone curves between their endpoints. By Lemma~\ref{lem:monotone}, $C\cap E$ contains an $ab$-path and an $cd$-path that are $x$- and $y$-monotone, and have a finite number of edges. We conclude by taking $S$ to be the union of all edges of these paths.
\end{proof}
}
\ShoLong{}{
\begin{proof}
\proofLemCover
\end{proof}
}

\paragraph{Relation to the Fr\'echet Distance.}
We show that the rock climber distance equals the Fr\'echet distance.

\begin{theorem}\label{thm:feasible1}
For two polygonal chains, $P$ and $Q$, it holds that $\rock(P,Q)=\frechet(P,Q)$.
\end{theorem}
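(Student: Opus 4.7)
The plan is to prove the two inequalities $\rock(P,Q)\geq \frechet(P,Q)$ and $\rock(P,Q)\leq \frechet(P,Q)$ separately. The first direction is immediate: every axis-parallel monotone path $\gamma$ from $(0,0)$ to $(1,1)$ in $F_\eps(P,Q)$ is in particular an $x$- and $y$-monotone continuous path, so it corresponds to a pair of orientation-preserving homeomorphisms $\sigma,\tau:[0,1]\to[0,1]$ witnessing $\frechet(P,Q)\leq \eps$. Taking the infimum over $\eps$ yields $\frechet(P,Q)\leq \rock(P,Q)$.

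For the nontrivial direction, I would fix an arbitrary $\eps>\frechet(P,Q)$ and construct an axis-parallel monotone path from $(0,0)$ to $(1,1)$ contained in $F_\eps(P,Q)$, which would show $\rock(P,Q)\leq \eps$ and, taking $\eps\downarrow \frechet(P,Q)$, the desired bound. Since $\frechet(P,Q)<\eps$, there is some $x$- and $y$-monotone curve $\gamma:[0,1]\to F_\eps(P,Q)$ from $(0,0)$ to $(1,1)$. Subdivide $\gamma$ at the (finitely many) parameter values where it crosses the grid lines $s=a_i$ or $t=b_j$ that bound the cells $C_{i,j}$. This yields a finite sequence of subcurves $\gamma_1,\ldots,\gamma_N$, each entirely contained in a single cell $C_{i,j}$ and hence in $C_{i,j}\cap F_\eps(P,Q)=C_{i,j}\cap E_{i,j}$.

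For every $\gamma_\ell$ that lives in a cell whose $E_{i,j}$ is an ellipse, I apply Lemma~\ref{lem:monotone} to replace $\gamma_\ell$ by an $x$- and $y$-monotone axis-parallel polygonal path $\beta_\ell\subset C_{i,j}\cap E_{i,j}$ with the same endpoints as $\gamma_\ell$. For the remaining cells, where $E_{i,j}$ is a slab parallel to the diagonal, the set $C_{i,j}\cap E_{i,j}$ is still a convex polygon, and the same greedy ``alternately extend horizontally and vertically as far as possible'' construction used in the proof of Lemma~\ref{lem:monotone} produces a finite axis-parallel monotone replacement (it is even simpler here since convex polygons have straight boundaries). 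Concatenating $\beta_1,\ldots,\beta_N$ yields a monotone axis-parallel path $\beta$ from $(0,0)$ to $(1,1)$ lying in $F_\eps(P,Q)$, which witnesses $\rock(P,Q)\leq \eps$.

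The main obstacle is purely bookkeeping rather than substantive: one must check that the concatenation $\beta$ is well-defined and monotone at the cell boundaries (which follows because consecutive $\beta_\ell$'s share an endpoint on a grid line, and both are $x$- and $y$-monotone increasing), and one should explicitly note that a single cell may be traversed by $\gamma$ multiple times in the presence of revisits to grid lines, in which case we simply handle each traversal as its own $\gamma_\ell$. The slab case is not covered by the statement of Lemma~\ref{lem:monotone}, so either a one-line extension of that lemma to slabs should be added, or we should observe directly that for a convex polygonal region containing two points $p,q$ with $p\leq q$ componentwise, the L-shaped path of length $\|p-q\|_1$ through the upper-left or lower-right corner of the bounding box often fits, and otherwise the same alternating greedy extension gives a finite axis-parallel staircase.
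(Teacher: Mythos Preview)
Your proposal follows essentially the same route as the paper: prove both inequalities, and for the nontrivial direction $\rock(P,Q)\leq\frechet(P,Q)$, take a monotone curve $\gamma$ in the free space and replace it cell by cell with an axis-parallel staircase via Lemma~\ref{lem:monotone}. Two small technical points deserve attention, however.

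First, in the ``easy'' direction, an axis-parallel monotone path $\gamma=(\sigma,\tau)$ does \emph{not} give orientation-preserving homeomorphisms of $[0,1]$: along each horizontal segment $\tau$ is constant, and along each vertical segment $\sigma$ is constant, so neither map is injective. The paper fixes this by perturbing the staircase into a strictly monotone curve inside $F_{\eps+\delta}(P,Q)$ for every $\delta>0$ and then letting $\delta\to 0$. You should do the same, or invoke the standard fact that replacing ``homeomorphism'' by ``continuous monotone surjection'' in the definition of $\frechet$ does not change its value.

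Second, and this is the issue you partly anticipated, the slab case needs care. If your chosen $\eps$ happens to equal the distance between some pair of parallel edges of $P$ and $Q$, then $C_{i,j}\cap E_{i,j}$ degenerates to a line segment parallel to the cell's diagonal, and your greedy alternating-extension construction stalls: every maximal horizontal or vertical segment through a point of that line has length zero, so you never make progress. The paper handles this explicitly by picking $\delta_0>0$ so that no parallel-edge distance lies in $(\eps,\eps+\delta_0)$ and then working in $F_{\eps+\delta}$. In your setup, since you already take $\eps>\frechet(P,Q)$ strictly and are free to choose it, it suffices to observe that the set of parallel-edge distances is finite and require in addition that $\eps$ avoid that finite set; then every nonempty slab has positive width and your convex-polygon argument goes through.

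With these two fixes your argument is correct and coincides with the paper's.
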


\global\def \proofThmFeasible {
We first prove $\rock(P,Q)\leq \frechet(P,Q)$. Put $\eps:=\frechet(P,Q)$. Let $\alpha:[0,1]\rightarrow F_\eps(P,Q)$ be a strictly $x$- and $y$-monotone increasing curve
from $(0,0)$ to $(1,1)$. If $P$ and $Q$ contain segments at distance precisely $\eps$ apart, then the free space $F_\eps(P,Q)$ would contain line segments in some cells. To avoid dealing with such cells, we inflate the free space as follows. Let $D$ be the set of distances between parallel edges from $P$ and $Q$, respectively. Since $D\subset \mathbb{R}$ is finite, there exists a sufficiently small $\delta_0>0$ such that all distances in $D$ are outside of the interval $(\eps,\eps+\delta_0)$. 
Then for every $\delta\in (0,\delta_0)$, the free space $F_{\eps+\delta}(P,Q)$ is the union of regions $C_{i,j}\cap E_{i,j}$, where $C_{i,j}$ is an axis-aligned rectangle (cell), and $E_{i,j}$ is an ellipse or a parallel strip; note that $F_\eps\subset F_{\eps+\delta}$.
By Lemma~\ref{lem:monotone}, each subcurve $\gamma\cap C_{i,j}$ can be replaced 
by an $x$- and $y$-monotone polygonal chain in $C_{i,j}\cap F_{\eps+\delta}(P,Q)$ 
with the same endpoints and with a finite number of axis-parallel edges. The concatenation of these paths is an $x$- and $y$-monotone polygonal chain in $F_{\eps+\delta}(P,Q)$ from $(0,0)$ to $(1,1)$, also with a finite number of axis-parallel edges. Consequently, $\rock(P,Q)\leq \eps+\delta = \frechet(P,Q)+\delta$ for all $\delta>0$, which in turn implies  $\rock(P,Q)\leq \frechet(P,Q)$.

It remains to prove $\frechet(P,Q)\leq \rock(P,Q)$. Put $\eps:=\rock(P,Q)$. Then the free space $F_\eps(P,Q)$ contains an $x$- and $y$-monotone staircase path $\gamma$ from $(0,0)$ to $(1,1)$. For every $\delta>0$, we can perturb $\gamma$ into a strictly $x$- and $y$-monotone curve from from $(0,0)$ to $(1,1)$ in $F_{\eps+\delta}(P,Q)$. Consequently, $\frechet(P,Q)\leq \eps+\delta = \rock(P,Q)+\delta$ for every $\delta>0$, which readily implies $\frechet(P,Q)\leq \rock(P,Q)$. 
%\leo{I don't see why we need the $\delta$ here...?}
%\csaba{We need $\delta$ otherwise we might not be able to perturb the axis-parallel edges of $\gamma$ into a strictly monotone curve within $F_\eps(P,Q)$.} \leo{But there cannot be a horizontal or vertical strip of width 0 in the free space. So any staircase lies within a region of a certain feature width and we can "shortcut" the staircase to a strictly monotone path, can't we?}
%\csaba{The free space may have reflex ``vertices'' (at the intersection of an ellipse with a cell). It is possible that a locally $x$-maximal vertex and a locally $x$-minimal vertex have the same $x$-coordinate: Then $\gamma$ can pass through both, but a strictly monotone curve cannot.}\leo{Ah, I see! Thanks!}
}
\ShoLong{
%\hugo{Write a proof sketch here! (?)}
}{
\begin{proof}
\proofThmFeasible
\end{proof}
}

For two polygonal chains, $P$ and $Q$, with a total of $n$ segments, $\frechet(P,Q)$ can be computed in $O(n^2\sqrt{\log n}(\log\log)^{3/2})$ time~\cite{Buchin2017}. Consequently, $\rock(P,Q)$
can be computed in the same time, regardless of the complexity of the %axis-parallel
path $\gamma$ in $F_{\eps+\delta}(P,Q)$. 

\paragraph{Relation to the Hausdorff and $k$-Fr\'echet Distances.}
%We say that a point $p=P(s)$ (resp., $q=Q(t)$) is \emph{critical} if $\min_{t\in [0,1]}\|p-Q(t)\|=\hausdorff(P,Q)$ (resp., $\min_{s\in [0,1]}\|P(s)-q\|=\hausdorff(P,Q)$). 
The $k$-station distance between $P$ and $Q$ equals their Hausdorff distance for a sufficiently large integer $k$.

\begin{theorem}\label{thm:hausdorff}
For two polygonal chains, $P$ and $Q$, and for $\eps>0$, there exists a $k\in \mathbb{N}$ such that $\station(k,P,Q) =\hausdorff(P,Q)$. %and thus $\cut(k,P,Q)\leq\eps$ 
%if and only if there are finitely many critical points in $P$ and $Q$.
\end{theorem}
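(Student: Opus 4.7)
The plan is to establish the equality by proving each inequality separately. First I would observe that $\station(k,P,Q)\geq \hausdorff(P,Q)$ holds for every $k\in\mathbb{N}$: if the $k$-station condition is satisfied at radius $\eps$, then for each $i$ the horizontal segment $[a_{i-1},a_i]\times\{b_{j(i)}\}$ lies in $F_\eps(P,Q)$ and for each $j$ the vertical segment $\{a_{i(j)}\}\times[b_{j-1},b_j]$ lies in $F_\eps(P,Q)$, so every point of $P$ is within distance $\eps$ of some point of $Q$, and symmetrically for $Q$; hence $\hausdorff(P,Q)\leq\eps$. It is therefore enough to exhibit some $k$ with $\station(k,P,Q)\leq\hausdorff(P,Q)$.

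Set $\eps:=\hausdorff(P,Q)$. By definition of the Hausdorff distance, $\proj_x(F_\eps(P,Q))=\proj_y(F_\eps(P,Q))=[0,1]$. The next step is to apply Lemma~\ref{lem:cover} cell-by-cell: for every cell $C_{i,j}$ with $C_{i,j}\cap F_\eps\neq\emptyset$, the region $C_{i,j}\cap E_{i,j}$ is the intersection of an axis-aligned rectangle with an ellipse (or a parallel slab, which is handled analogously), so the lemma produces a finite family of axis-parallel segments in $C_{i,j}\cap F_\eps$ whose $x$- and $y$-projections match those of $C_{i,j}\cap F_\eps$. Taking the union over all $mn$ cells yields a finite family $\mathcal{H}$ of horizontal segments in $F_\eps$ whose $x$-projections cover $[0,1]$, together with a finite family $\mathcal{V}$ of vertical segments in $F_\eps$ whose $y$-projections cover $[0,1]$.

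The last step converts $\mathcal{H}\cup\mathcal{V}$ into a valid station configuration. I will let $0=a_0<a_1<\ldots<a_p=1$ be the sorted union of $\{0,1\}$, the $x$-endpoints of the segments in $\mathcal{H}$, and the $x$-coordinates of the segments in $\mathcal{V}$, and define $0=b_0<\ldots<b_q=1$ symmetrically on the $Q$-axis. Each subinterval $[a_{i-1},a_i]$ then lies inside the $x$-projection $[l,r]$ of some horizontal segment $[l,r]\times\{h\}\in\mathcal{H}$: the intervals $[l,r]$ cover $[0,1]$, while no $l$ or $r$ lies strictly in $(a_{i-1},a_i)$ by construction, so the interval containing the midpoint of $[a_{i-1},a_i]$ must contain the whole subinterval. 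The height $h$ is already among the $b_j$, so it supplies the required station $Q(b_{j(i)})$ with $[a_{i-1},a_i]\times\{b_{j(i)}\}\subset F_\eps$. The symmetric argument, using $\mathcal{V}$, handles each $[b_{j-1},b_j]$. The resulting $(p+q)$-station configuration has rope length at most $\eps$, so $\station(p+q,P,Q)\leq\hausdorff(P,Q)$, completing the equality.

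The hard part is the coupling between the two subdivisions: the points $\{a_i\}$ must simultaneously serve as $P$-partition endpoints and as $P$-station positions for covering $Q$, and likewise for $\{b_j\}$; picking $\{a_i\}$ only from $\mathcal{H}$ would cover $P$ but leave the $Q$-side unverified. Injecting into $\{a_i\}$ both the $x$-extents of $\mathcal{H}$ and the $x$-coordinates of $\mathcal{V}$ (and symmetrically for $\{b_j\}$) is what makes the two coverage conditions compatible in one finite configuration, and it bounds $k$ by a linear function of $|\mathcal{H}|+|\mathcal{V}|$.
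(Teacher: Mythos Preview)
Your argument has a genuine gap in exactly the case that the paper's inflation step is designed to handle. Lemma~\ref{lem:cover} is stated for an ellipse $E$, and you assert that the parallel-slab case ``is handled analogously.'' But when $P$ and $Q$ contain a pair of parallel edges at distance exactly $\eps=\hausdorff(P,Q)$, the region $C_{i,j}\cap E_{i,j}$ degenerates to a single (non-axis-parallel) line segment, and no finite family of axis-parallel segments contained in it can reproduce its $x$- and $y$-projections. If that cell happens to be the only part of $F_\eps(P,Q)$ covering some subinterval of $[0,1]$ on one of the axes (e.g., when $P$ and $Q$ are two parallel segments at distance $\eps$), your families $\mathcal{H}$ and $\mathcal{V}$ cannot be built, and the rest of the argument collapses. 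The paper deals with this by choosing $\delta_0>0$ so that no distance between parallel edges falls in $(\eps,\eps+\delta_0)$, working instead in $F_{\eps+\delta}(P,Q)$ for $\delta\in(0,\delta_0)$, where every slab has positive width and the cover lemma (or its easy slab analogue) legitimately applies; one then concludes $\station(k,P,Q)\le\eps+\delta$ and lets $\delta\to 0$. The Remark immediately following the theorem is explicit that this inflation ``is necessary.''

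That said, your final step---refining the two subdivisions $\{a_i\}$ and $\{b_j\}$ using both the endpoints of $\mathcal{H}$ and the $x$-coordinates of $\mathcal{V}$ (and symmetrically on the other axis) so that the coverage conditions on both sides become compatible---is more explicit than the paper's treatment, which simply says the segment set ``confirms'' the bound without spelling out the conversion to a station configuration. That part is correct and worth keeping; the fix is to run your construction inside $F_{\eps+\delta}(P,Q)$ rather than $F_\eps(P,Q)$, and then pass to the limit.
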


\global\def \proofHausdorff {
We prove $\station(k,P,Q)\leq \hausdorff(P,Q)$ for a sufficiently large $k\in\mathbb{N}$. Put $\eps:=\hausdorff(P,Q)$. If $P$ and $Q$ contain segments at distance precisely $\eps$ apart, then the free space $F_\eps(P,Q)$ would contain line segments in some cells.  There is a $\delta_0>0$ such that the distance of any two parallel edges of $P$ and $Q$ are outside of the interval $(\eps,\eps+\delta_0)$.

Consider the free space $F_{\eps+\delta}(P,Q)$ for some $\delta\in (0,\delta_0)$. 
By Lemma~\ref{lem:cover}, there is a finite set $\mathcal{S}$ of axis-parallel segments 
whose orthogonal projections to each coordinate axis is the same as the projection of the free space $F_{\eps+\delta}(P,Q)$, that is, $\proj_x(C\cap E)=\proj_x(\bigcup \mathcal{S})$ and 
$\proj_y(C\cap E)=\proj_y(\bigcup \mathcal{S})$.  
The set $\mathcal{S}$ confirms that $\station(P,Q)\leq \eps+\delta = \hausdorff(P,Q)+\delta$ for every $\delta>0$, hence $\station(P,Q)\leq \hausdorff(P,Q)$.

Finally, we show that $\hausdorff(P,Q)\leq \station(k,P,Q)$ for all $k\in \mathbb{N}$.
Indeed, put $\hausdorff(P,Q)=\eps$. Then at least one of the curves contains a point at distance $\eps$ from the other curve. Without loss of generality, assume $p\in P$ and $\dist(p,Q)=\eps$. Regardless of the subdividion of $P$ and $Q$ into $k$ subcurves,
we have $\delta_F(P_i,Q_{\pi(i)})\geq \eps$ for the subcurve $P_i$ that contains $p$.
Consequently, $\station(k,P,Q)\geq \cut(k,P,Q)\geq \eps=\hausdorff(P,Q)$ 
for all $k\in \mathbb{N}$.
}
\ShoLong{
%\hugo{Write a proof sketch here! (?)}
}
{
\begin{proof}
\proofHausdorff
\end{proof}
}

%	\begin{figure}[ht]
%		\centering
%		\includegraphics[scale=0.9]{}
%		\caption{Free space diagram of curves $P$ and $Q$ with refined grid in blue and selected component parts in red.}\label{fig:feasibility}
%	\end{figure}

\paragraph{Remark.} 
\ShoLong{In the proofs of Theorems~\ref{thm:feasible1} and~\ref{thm:hausdorff} (see Appendix), we ``inflate'' the free space $F_\eps(P,Q)$ into $F_{\eps+\delta}(P,Q)$, $\delta>0$, to avoid the case that $P$ and $Q$ contain parallel segments at distance $\eps$.}
{In the proofs of Theorems~\ref{thm:feasible1} and~\ref{thm:hausdorff}, we have ``inflated'' the free space $F_\eps(P,Q)$ into $F_{\eps+\delta}(P,Q)$, $\delta>0$, to avoid the case that $P$ and $Q$ contain parallel segments at distance $\eps$. }
This step is necessary, as the free space $F_\eps(P,Q)$, where $\eps=\frechet(P,Q)$, need not contain an axis-parallel path from $(0,0)$ to $(1,1)$. In the simplest example, $P$ and $Q$ are two parallel segments: The free space consists only of the straight line segment at the diagonal of $[0,1]^2$.
	
	\begin{figure}[ht]
		\centering
		\includegraphics[scale=0.9]{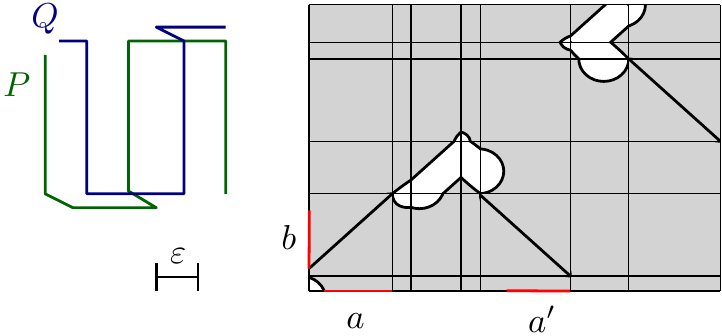}
		\caption{To project onto intervals $a$ and $a'$ we need to use the two straight line components above them, but then $b$ has two preimages for its projection.}\label{fig:feas_example}
	\end{figure}
	
Figure~\ref{fig:feas_example} shows an example where three segments in $P$ are parallel to two segments in $Q$ at distance $\eps$ apart.
It is impossible to cut $P$ and $Q$ into $k\in \{2,3\}$ pieces such that 
$\cut(k,P,Q)\approx \hausdorff(P,Q)$. However, if we allow an arbitrarily large $k\in \mathbb{N}$,
it is possible to place multiple cuts within a tiny distance in order to make sure that both parameter spaces can be covered by tiny slices of components.

\paragraph{Remark.} 
For two polygonal chains, $P$ and $Q$, with a total of $n$ segments, the free space 
$F_\eps(P,Q)$ is bounded by $N=O(n^2)$ line segments and elliptical arcs for every $\eps>0$. 
Mitchell et al.~\cite{DBLP:journals/comgeo/MitchellPS14,DBLP:journals/algorithmica/MitchellPSW19} proved that the rectilinear link distance between two points in a rectilinear polygonal domain with $N$ vertices can be computed in $O(N\log N)$ time. Perhaps this method can be adapted to decide whether the rectilinear link distance between $(0,0)$ and $(1,1)$ in the free space $F_\eps(P,Q)$ does not exceed a given parameter in time polynomial in $k$ and $n$. 
One could then find the infimum of $\eps>0$ such that $F_\eps(P,Q)$ contains such a path with $k$ or fewer links by parametric search~\cite{DBLP:journals/comgeo/OostrumV04}, and compute $\rock(k,P,Q)$ in polynomial time.

\section{NP-Hardness}
\label{sec:hard}

The $k$-station distance raises several optimization problems.
\begin{itemize}
    \item Can we find the minimum $\eps>0$ such that $\station(k,P,Q)\leq \eps$ for two polygonal chains $P$ and $Q$, and an integer $k$?
    \item Can we find the minimum $k\in \mathbb{N}$ for a given threshold $\eps>0$?
\end{itemize}
In this section, we show that the decision versions of these problems are NP-hard.
That is, it is NP-hard to decide whether $\station(k,P,Q)\leq \eps$. 
Our reduction will produce weakly simple polygonal chains $P$ and $Q$.
A polygonal chain is \emph{weakly simple} if its vertices can be moved by some arbitrary small amount to produce a Jordan arc~\cite{DBLP:journals/dcg/AkitayaAET17,DBLP:conf/soda/ChangEX15}.
%\hugo{cite our paper?}
%\leo{Which one? Feel free in any case!}
%\csaba{I think we should cite Chang and Erickson's SODA'15 paper, as it goes into the subtleties of weakly simple polygons, and reviews prior flawed definitions.}

We reduce from \textsc{Planar-Rectilinear-3SAT} which is NP-complete~\cite{knuth1992problem}. 
An instance of \textsc{Planar-Rectilinear-3SAT} is defined by a boolean formula $\Phi$ in 3-CNF with $n$ variables and $m$ clauses. The formula is accompanied by a planar rectilinear drawing of the bipartite graph between variables and clauses in an integer grid where all variables are represented by points on the $x$-axis, and edges do not cross this axis. 
The problem asks whether there is an assignment from the variable set to  $\{\texttt{true, false}\}$ such that $\Phi$ evaluates to \texttt{true}.
\begin{theorem}
\label{thm:hardness}
It is NP-hard to decide whether $\station(k,P,Q)\le \eps$ for given $k>0$ and $\eps>0$, even when $P$ and $Q$ are weakly simple polygonal chains.
\end{theorem}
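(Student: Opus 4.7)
The plan is to reduce from \textsc{Planar-Rectilinear-3SAT}. Given a 3-CNF formula $\Phi$ with $n$ variables and $m$ clauses, together with its planar rectilinear drawing on an integer grid, I would construct weakly simple polygonal chains $P$ and $Q$, an integer $k=k(\Phi)$, and a threshold $\eps>0$ so that $\station(k,P,Q)\le \eps$ if and only if $\Phi$ is satisfiable. The chains will roughly trace the drawing of $\Phi$ in parallel, at Hausdorff distance slightly under $\eps$, with the two chains coupled by local gadgets whose structure forces combinatorial choices in the placement of the $k$ stations.

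The first step is to design a \emph{variable gadget}: a short segment of $P$ and a matching zig-zag in $Q$ arranged so that the minimum number of stations inside the gadget equals some fixed $s_i$, and is achieved by exactly two ``canonical'' placements, corresponding to $x_i=\texttt{true}$ and $x_i=\texttt{false}$. The idea is that shifting any station by one notch is forbidden, because stations on $Q$ must cover contiguous arcs of $P$, so a mis-aligned placement leaves a small subarc uncovered and thus requires an extra station. A \emph{wire gadget} then extends this choice along the edges of the rectilinear drawing using a long zig-zag between $P$ and $Q$; the parity of the station placement along the wire is locked to the variable's truth value, and any ``phase flip'' along a wire costs one extra station.

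The \emph{clause gadget} is a junction where three wires meet in a common local region. I would design it so that a fixed count $c$ of stations suffices to cover the junction together with the three incoming wires if and only if at least one wire arrives in the ``true-literal'' phase; if all three wires arrive in the ``false-literal'' phase, covering the junction requires at least $c+1$ stations. Setting $k$ to be the sum of the local minima over all gadgets then gives the equivalence $\station(k,P,Q)\le \eps \iff \Phi \text{ is satisfiable}$. Weak simplicity of $P$ and $Q$ is ensured by a global $\delta$-perturbation at the places where the two chains, or either chain with itself, would otherwise cross; since each gadget uses only $O(1)$ turns on the integer grid, the construction has polynomial size in $n+m$.

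The main obstacle is enforcing the single-arc constraint simultaneously inside the wire and clause gadgets. Because each station on $Q$ can only cover a single contiguous subcurve of $P$ (and vice versa), an off-phase station can inadvertently ``steal'' coverage from the neighboring gadget, changing the bookkeeping. I would therefore need to insert auxiliary \emph{barrier} segments between gadgets — short excursions of $P$ and $Q$ that move outside every disk of radius $\eps$ centered on any potential stray station — so that any cover respecting the budget $k$ decomposes into independent per-gadget covers. Verifying that this decomposition is tight to within a $1$-station slack per clause, so that the gap between satisfiable and unsatisfiable instances is exactly $m$, will be the most delicate part of the argument.
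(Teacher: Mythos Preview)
Your high-level plan matches the paper's: reduce from \textsc{Planar-Rectilinear-3SAT}, build variable/wire/clause/separator gadgets whose local station lower bounds sum to $k$, and argue that the bound is met iff $\Phi$ is satisfiable. The geometry the paper actually uses, however, differs from your sketch in a way that matters for the lower-bound arguments. In the paper $P$ and $Q$ do not run ``in parallel at Hausdorff distance slightly under $\eps$''; they \emph{coincide} along the wires (literal gadgets). The wire lower bound then follows cleanly: a unit disk centered anywhere on this common integer-grid path covers at most $2$ units of it, so a length-$\ell$ critical path needs $\ell/2+1$ pitches, forcing stations onto grid points with strict red/blue alternation; the truth value is just which color starts. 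Your segment-plus-zig-zag picture does not obviously produce such a rigid two-state dichotomy, and you give no argument for why exactly two placements are optimal. The paper also needs a separate \emph{negation gadget} (a five-segment detour of odd parity) to flip the alternation for negative literals; your proposal does not address negation.

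Your barrier idea has a gap as stated. If the excursion ``moves outside every disk of radius $\eps$ centered on any potential stray station,'' then either it pushes $P$ and $Q$ more than $\eps$ apart somewhere (making $\station(k,P,Q)>\eps$ for all $k$), or the barrier portions themselves must be covered by new stations whose number you have not counted and whose placement might itself leak across the boundary. The paper sidesteps this with a \emph{separation gadget} of fixed cost $10$: a short detour along half-hexagons where $P$ and $Q$ still coincide, but whose non-grid geometry forces a blue and a red station at each of its two critical endpoints regardless of the incoming parity (their Lemma~\ref{lem:criticalPoints}), decoupling adjacent variables without any feasibility issue. Finally, you do not need the satisfiable/unsatisfiable gap to be ``exactly $m$''; a gap of $1$ suffices, since $k$ equals the exact sum of local lower bounds and a single unsatisfied clause already forces one extra pitch in the clause gadget.
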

\begin{proof}
\ShoLong{
We present here only an overview of the proof. The details can be found in the Appendix.
}{
We start with a quick overview of the reduction and then continue with the details.
}
Given an instance $A$ of \textsc{Planar-Rectilinear-3SAT}, we build an instance $B$ of our problem producing two polygonal chains, $P$ and $Q$, as shown in Figure~\ref{fig:reduction}. The chain $P$ ($Q$) is represented by a blue (red) curve. Black edges represent overlap between $P$ and $Q$. We set $\eps:=1$, and design $P$ and $Q$ so that the length of almost every edge is an integer. That allows us to compute locally optimal solutions along the black edges that require a consistent choice of station placement alternating between blue and red stations, which in turn establishes a lower bound on the number of stations. 
We set the parameter $k$ so that every solution must meet that lower bound.
In the variable gadget, a concatenation of \emph{literal gadgets} (Figure~\ref{fig:gadgets} (a)) must alternate consistently in order to achieve this lower bound. 
The choice of whether to start with a blue or a red station encodes the truth value of the variable. The \emph{separation gadget} (Figure~\ref{fig:gadgets} (c)) allows choosing truth values for each variable independently. In the \emph{clause gadget} (Figure~\ref{fig:gadgets} (d)), a subchain of $Q$ (near $p_5$) can be covered by a blue station of the alternation of a literal gadget if the literal evaluates to \texttt{true}.
If all literals in the clause evaluate \texttt{false}, then either an additional station is needed or $\eps$ has to be increased. Hence, $\station(k,P,Q)\le \eps$ if and only if the instance $A$ admits a positive solution.
%In the following two paragraphs we present the details of our reduction.
More precisely, we construct our curves and prove correctness of the reduction as follows:
	\begin{figure*}[ht]
		\centering
		\includegraphics[width=\ShoLong{.7}{}\textwidth]{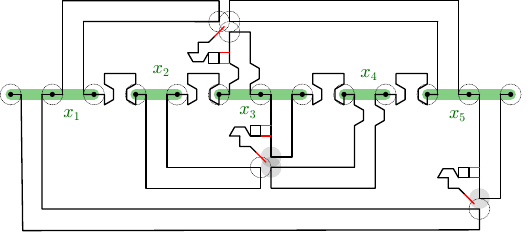}
		\caption{Reduction from the instance $(x_1\vee x_3\vee x_5)\wedge(x_1\vee \overline{x_5})\wedge(x_2\vee\overline{x_3}\vee\overline{x_4})$.
		The segments in the $x$-axis corresponding to the five variables are shown in green.}\label{fig:reduction}
	\end{figure*}
	\begin{figure}[ht]
		\centering
		\includegraphics[width=\ShoLong{}{0.6}\columnwidth]{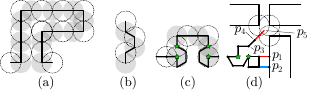}
		\caption{(a) Literal, (b) negation, (c) separation, and (d) clause gadgets.
		}\label{fig:gadgets}
	\end{figure}
%
%\global\def \hardnessDetails {

\paragraph{Construction.}
Let $A$ be an instance of \textsc{Planar-Rectilinear-3SAT}. Without loss of generality, we may assume the following properties about the rectilinear drawing of instance $A$: The drawing lies on an integer grid. Each variable $x_i$ is represented by a line segment of length $\deg(x_i)$ on the $x$-axis. The variable segments are one unit apart. For clauses containing $3$ literals, the corresponding vertex is located vertically above/below the segment of its middle variable, and is incident to two edges with exactly one bend and one straight vertical edge.
For clauses containing $2$ literals, the corresponding vertex is vertically above/below its rightmost variable and is incident to one edge with one bend and one straight edge.
Next, scale up the given rectilinear drawing of instance $A$ vertically by a factor of $8$ and horizontally by $4$.
Then replace each edge between a variable and a clause by a literal gadget (Figure~\ref{fig:gadgets} (a)) that starts and ends with unit horizontal segment along the $x$-axis and remains in the unit-neighborhood (in $L_1$ norm) of the corresponding edge in $A$.

If the edge is a straight-line segment (has one bend) and corresponds to a positive (negative) literal, then replace the subchain of length $4$ with endpoints on the $x$-axis by the \emph{negation gadget} (Figure~\ref{fig:gadgets} (b)).
The gadget is made of $5$ unit-length line segments, $3$ of which are vertical, and the remaining two segments have slopes $\frac{\sqrt{3}}{3}$ and $-\frac{\sqrt{3}}{3}$, resp.,
so that the height of the gadget is $4$. 

We add the separation gadget (Figure~\ref{fig:gadgets} (c)) between every pair of consecutive literal gadgets that correspond to different variables. %The endpoints of the gadget are marked by small dots in the figure. 
The gadget has width $3$ (starting and ending at vertices marked with a green star), so we add one horizontal unit segment to the left literal in order to connect the gadgets.
In both $P$ and $Q$, the gadget contains $6$ vertical unit segments, $4$ unit segments at slopes $\pm\frac{\sqrt{3}}{3}$, and one horizontal segment of length 3.
%, and one horizontal unit segment at the left of the gadget.
	
For each vertical edge in the rectilinear drawing of $A$, we add a clause gadget (Figure~\ref{fig:gadgets} (d)) at the vicinity of the clause vertex as follows. Assume that the clause is drawn in the upper half-plane, reflecting the construction through the $x$-axis otherwise. 
Let $p_1$ be 2 units below the left corner of the literal gadget corresponding to the vertical straight edge incident to the clause in $A$. 
The path connecting the two green stars  traces three consecutive edges of a regular hexagon of unit-length sides.
Set $p_2=p_1+\left(0,-1\right)$, 
$p_3=p_1+\left(-2,1\right)$,
$p_4=p_1+\left(-\sqrt{2},3-\sqrt{2}\right)$, and 
$p_5=p_1+\left(-\frac{1}{2},3-\frac{1}{2}\right)$.
The remaining points lie in the integer grid 
and can be easily recovered from Figure~\ref{fig:gadgets} (d).
The clause gadget consists of a subchain of $P$ consisting of two paths between $p_1$ and $p_4$, and a subchain of $Q$ consisting of two paths between $p_1$ and $p_5$, as shown in Figures~\ref{fig:local-solution}~(c) and (d).
We split the chains at the green star closest to $p_1$ and assign the parts adjacent to the literal gadget to that gadget, i.e., we consider that the clause gadget starts at the green star.
Finally, we subdivide the literal gadget at $p_1$ into two subchains of $P$ and $Q$ each.
%$(p_1,p_2,p_3,p_5,p_6,p_7,p_8,p_9,p_8,p_7,p_6,p_5,p_3,p_2,p_1)$ of $P$, and the subcurve $(p_1,p_4,p_5,p_3,p_6,p_7,p_8,p_9,p_{10},p_9,p_8,p_7,p_6,p_3,p_5,p_4,p_1)$ of $Q$.

After combining the subchains of the gadgets, described above, we obtain two weakly simple polygonal chains, $P$ and $Q$. 
We call the endpoints of $P$ and $Q$, and the points marked by a green star in Figure~\ref{fig:gadgets} \emph{critical points}.
An orthogonal path between critical points is called \emph{critical path}.
Let $\ell_c$ be the length of a critical path $c$, and let $C_i$ be the set of critical paths in the literal gadgets corresponding to the $i$-th variable $x_i$. Set $k=\sum_{i=1}^n \sum_{c\in C_i}(\frac{\ell_c+1}{2})+ 10(n-1)+12m$, and $\eps=1$.
This concludes the construction of instance $B$.

\paragraph{Correctness.}
We can describe a solution for $B$ as follows.
Subdivide $P$ (resp., $Q$) into $k_1$ ($k_2$) subchains $P_1,\ldots ,P_{k_1}$ ($Q_1,\ldots ,Q_{k_2}$) called \emph{pitches} and let $p_i$ and $p_{i+1}$ ($q_i$ and $q_{i+1}$) be the endpoints of $P_i$ ($Q_i$), called \emph{stations}. 
The pitches form a solution if $k_1+k_2\le k$ and for every pitch $P_i$ ($Q_i$) lies in a disk of radius $\eps=1$ centered at a  station in $Q$ ($P$). In Figs.~\ref{fig:reduction}, \ref{fig:gadgets}, and \ref{fig:local-solution} the centers of circular disks represent stations. A blue (red) disk is centered at some station $p_i$ ($q_i$).

	\begin{figure*}[h]
		\centering
		\includegraphics[width=\ShoLong{.85}{}\textwidth]{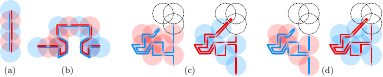}
		\caption{Locally optimal solutions.}\label{fig:local-solution}
	\end{figure*}

($\Rightarrow$) First, assume that $A$ is a positive instance.
For each variable $x_i$ assigned \texttt{true}, subdivide the subchains of $P$ and $Q$ that correspond to the literal gadgets of $x_i$ by placing blue (red) stations at the center of dashed (grey) circles as shown in Figure~\ref{fig:gadgets} (a). 
For \texttt{false}-valued variables, switch red and blue in the previous sentence.
Since the literal gadgets have even length in $P$ and $Q$ by construction, the alternation of blue and red stations along a variable is consistent and the literal gadget is subdivided into $\frac{\ell_i}{2}+1$ pitches (Figure~\ref{fig:local-solution}~(a)). For every separation gadget, add a blue (red) station at the center of each dashed (grey) disk shown in Figure~\ref{fig:gadgets}~(c). 
Notice that both blue and red stations are placed on critical points.
In our instance, this subdivides the separation gadget into a total of 10 pitches (in $P$ and $Q$ combined), shown in Figure~\ref{fig:local-solution} (b) (the figure shows 5 additional pitches that are counted as part of the adjacent literal gadgets).
Finally, for each clause gadget, subdivide $P$ and $Q$ as shown in Figure~\ref{fig:local-solution} (c) or (d) if the vertical literal (that corresponds to the vertical edge incident to the clause in $A$) evaluates to \texttt{true} or \texttt{false}, respectively. The number of pitches created that are not counted in the literal gadgets is 12. By construction, every pitch in the subdivision is within distance 1 from a station of the opposite color. Therefore, instance $B$ admits a solution.

($\Leftarrow$) Now assume that $B$ admits a solution $\mathcal{S}$.
We show that, as $k$ is the sum of local lower bounds on the number of pitches in a solution, $\mathcal{S}$ locally resembles Figure~\ref{fig:local-solution}. We fix a direction of traversal of $P$ and $Q$ from its left to its right endpoint.

\begin{lemma}
\label{lem:criticalPoints}
If $B$ admits a solution, there exists a solution $\mathcal{S}$ in which there is a blue and a red station at every critical point.
\end{lemma}
\begin{proof}
Endpoints of $P$ and $Q$ are endpoints of pitches in $\mathcal{S}$ and are therefore stations.
For the remaining critical points, we argue, without loss of generality, for points $p_1$ and $p_5$ in the separation gadget.
Point $p_2$ in $P$ ($Q$) must be covered by a red (blue) station $r$ ($b$) in the path $(p_2,p_3,p_4)$.
Let $Q_i$ ($P_j$) be the pitch starting at $r$ ($b$).
Its other endpoint must precede $p_5$ as $r$ ($b$) must be covered by a blue station on the path from $p_1$ to $p_4$.
Then, we can move both $r$ and $b$ to $p_1$ without affecting the solution because $Q_i$ and $P_j$ remain covered (by $b$ and $r$, resp.), and the pitches to the left become shorter and therefore are still covered. Similarly, the other endpoint of both $Q_i$ and $P_j$ can be moved to $p_5$.
\end{proof}

\begin{lemma}
\label{lem:literalLowerBound}
A pair of critical paths $P'$ and $Q'$ (i.e., orthogonal subpaths of $P$ and $Q$, respectively, between two critical points) of lengths $\ell_P$ and $\ell_Q$ require at least $\frac{\ell_P+\ell_Q}{2}+1$ pitches in $\mathcal{S}$. 
If this lower bound is attained, the stations are laid out as in Figure~\ref{fig:local-solution} (a), alternating between red and blue stations one unit apart along the path. 
\end{lemma}
\begin{proof}
The endpoints of $P'$ ($Q'$) are blue (red) stations by Lemma~\ref{lem:criticalPoints}.
%Let $\mathcal{S}'$ be the subset of $\mathcal{S}$ that intersect $P'$ and $Q'$.
By construction, $P'$ ($Q'$) are orthogonal paths whose vertices lie on the integer grid. %and every red station that can cover a subchain in $P'$ ($Q'$) is on $Q'$ ($P'$).
We first argue that the minimum number of pitches of $P'$ in a solution is $\ell_P/2$ and assume that $\ell_P$ is even.
Let $\mathcal{S}_P$ be a minimum cardinality partition of $P'$ such that every subchain in $\mathcal{S}_P$ is within unit distance from some point in $Q$.
We claim that all blue stations in $\mathcal{S}_P$ lie on the integer grid, and every pitch has length $2$.
We prove the claim by induction on the length of $P'$.
Assume $v$ is the first blue station in $\mathcal{S}_P$ not in the integer grid or that the subchain $P_i$ in $\mathcal{S}_P$ have length different than two. 
Let $v'$ ($v''$) be its successor (predecessor) blue stations.
Since $v'$ lies on the integer grid by the induction hypothesis, a unit disk centered at a point in $Q$ that contains $v'$ can cover a subchain of $P'$ of length at most $2$ by construction.
The length of such a path is exactly $2$ when $v'$ is on the boundary of the disk.
We expand $P_i$ by moving $v$ along $P'$.
Hence, $|\mathcal{S}_P|=\ell_P/2$.

In order to achieve the lower bound of $\frac{\ell_P+\ell_Q}{2}+1$, either $P'$ or $Q'$ must be partitioned optimally.
Without loss of generality, let $P'$ be  optimally partitioned.
By the previous claim, the subchains of $P'$ have length $2$.
By construction, the at least one subchain of $Q'$ must have length $1$ so that all subchains of $P'$ are within unit distance from a red station.
Therefore the number of subchains of $Q'$ is larger than $\ell_Q/2$.
If the lower bound is achieved, the stations alternate as claimed.
A similar argument proves the claim for odd $\ell_P$ and $\ell_Q$ with the exception that both $P'$ and $Q'$ will each contain a pitch of length $1$ and the remaining pitches will have length $2$.
\end{proof}

We now establish lower bounds for the separation and clause gadgets.
Assume that $\mathcal{S}$ satisfies Lemma~\ref{lem:criticalPoints}.
A direct consequence of Lemmas~\ref{lem:criticalPoints} and \ref{lem:literalLowerBound} is that the number of pitches in the separation gadget is at least $10$.
The lower bound can be achieved as shown in Figure~\ref{fig:local-solution}~(b).
% We first argue that we may assume, without loss of generality, $\mathcal{S}$ has a red and a blue station at $p_2$ (Figure~\ref{fig:gadgets} (d)). The same argument proves that $\mathcal{S}$ contains blue and red stations represented by the two gray disks with dashed perimeter in Figure~\ref{fig:gadgets} (c).
% Point $p_3$ in $P$ ($Q$) must be covered by a red (blue) station $r$ ($b$) in the path $(p_2,p_3,p_4)$.
% Let $Q_i$ ($P_j$) be the subchain starting at $r$ ($b$).
% Its other endpoint must be before $p_6$ because $r$ ($b$) must be covered by a blue station on the path from $p_2$ to $p_5$.
% Then, we can move both $r$ and $b$ to $p_2$ without affecting the solution because $Q_i$ and $P_j$ would still be covered (by $b$ and $r$ respectively), and the subchains to the left become shorter and therefore are still covered.
% Then, the other endpoint of both $Q_i$ and $P_j$ can be moved to $p_6$.
% \csaba{By ``moving'' stations, we make further assumptions about the (given) solution $\mathcal{S}$. Should we spell out these assumptions explicitly? For example, we could say that if $B$ admits a solution, then it also admits a ``regular'' solutions that satisfies all these assumptions...}
In the clause gadget, the pitch of $P$ with an endpoint at the critical point must have the other endpoint before $p_3$.
Notice that there is a neighborhood of $p_3$ in $Q$ that can only be covered by a blue station on interior of the path $(p_3,p_4,p_3)$ of $P$.
Hence, $\mathcal{S}$ contain at least 3 pitches of $P$ between critical points.
The nighborhood of $p_5$ in $Q$ can only be covered by a blue station on a literal gadget because $P$ turns around at $p_4$ and $\|p_4p_5\|>1$.
%\csaba{Why cannot it be covered by a blue station at $p_8$?}
Then, we can show in a similar way that 
$\mathcal{S}$ contain at least 5 pitches of $Q$ between critical points.
%using arguments similar to the proof of ($\star$) one can verify that the clause gadget must be partitioned into at least 14 subchains. This lower bound can only be acheived if one of the nearby literals places a blue station at the center of the dashed circles on Figure~\ref{fig:local-solution}~(c) and (d). 
%\csaba{Can the lower bound of 14 also be achieved by placing a station at $p_8$ (or anywhere on $p_7p_8$)? Perhaps the lower bond of 13 can only be achieved in this way...}
%\hugo{The problem basically gets divided at $p_6$. Above it we still need a blue station so there will be two blue subchains above $p_6$. The best you can do is to have 3 red subchains because the neighborhood of $p_8$ must be covered by a different station than the neighborhood of $p_6$.}
%Since one of these subchains is already counted in the literal gadget, we assign 13 subchains for the clause gadget.
Then, counting the pitches in the half-hexagon, the clause gadget requires at least 12 pitches.
Such bound can be achieved as shown in Figure~\ref{fig:local-solution}~(c)--(d) if an adjacent literal has a blue station represented by one of the dashed circles.

Since $k$ is the sum of all local lower bounds, every integer-length path between critical points is partitioned as in Lemma~\ref{lem:literalLowerBound}.
%Then, for every variable $x_i$, its corresponding literal gadgets are partitioned into $\ell_i+1$ subcurves.
We now show that the alternation in the literal gadgets must be consistent at the vicinity of a clause gagdet.
Refer to Figure~\ref{fig:local-solution}~(c) and (d).
Let $P'$ and $P''$ ($Q'$ and $Q''$) be the upper and lower critical paths of $P$ ($Q)$ in the literal gadget adjacent to the right critical point.
By construction they all have even lengths.
If $P'$ is is optimally partitioned, by Lemma~\ref{lem:literalLowerBound} the first and last pitches of $Q'$ have length $1$ (as in Figure~\ref{fig:local-solution}~(c)).
For contradiction, assume that $P''$ is not partitioned optimally, and, therefore, the corresponding pitches of $P''$ and $Q''$ are as in Figure~\ref{fig:local-solution}~(d).
Then, the second pitch of $Q''$ does not lie within a unit distance from a blue station.
Similar arguments show a contradiction for the other cases in which the alternation is not consistent. 
Therefore, each variable has a well-defined truth value associated with the station alternation.
%
%Without loss of generality, $\mathcal{S}$ places both blue and red stations at the center of the half hexagons of the separation gadget. We can proceed as in the proof of  ($\star$) to show that the path connecting the top of the half hexagons must be partitioned into $6$ subchains as shown in Figure~\ref{fig:local-solution}~(b).
% Then, the minimum number of subchains at a separation gadget (not counting the subchains already counted by the literal gadgets) is $11$.
%Therefore, since $k$ is the sum of all local lower bounds, every literal gadget of the same variable must be consistently partitioned by alternating blue and red stations one unit apart and each variable has a well-defined truth value.
%\csaba{hence each variable has a well-defined truth value?}.
Additionally, every clause is adjacent to a literal that evaluates to \texttt{true}.
Hence, converting the alternation into a truth assignment for the variables results in a solution for $A$.
%}
\end{proof}
\section{Approximation Algorithms}
\label{sec:apx}

In this section, we show that for two polygonal chains, $P$ and $Q$, and a threshold $\eps>0$, 
we can approximate the minimum $k\in \mathbb{N}$ for which $\station(k,P,Q)\leq \eps$ up to a factor of 2. 
Recall that $\station(k,P,Q)\leq \eps$ if and only if there exist a set $\mathcal{S}$ of $k$ axis-parallel line segments in the free space $F_\eps(P,Q)$ such that 
$\proj_x(\bigcup \mathcal{S})=\proj_x(F_\eps(P,Q))$,
$\proj_y(\bigcup \mathcal{S})=\proj_y(F_\eps(P,Q))$,
and the projections of the segments onto the two 
coordinate axes have pairwise disjoint relative interiors.

The condition that the projections of segments in $\mathcal{S}$ are interior-disjoint is crucial. Without this condition, the problem would be separable, and we could find an optimal solution efficiently: Let $\opt_x$ be a minimum cardinality set of horizontal segments in $F_\eps(P,Q)$ such that $\proj_x(\bigcup \opt_x)=\proj_x(F_\eps(P,Q))$, and $\opt_y$ a minimum set of vertical segments in $F_\eps(P,Q)$ such that $\proj_y(\bigcup \opt_y)=\proj_y(F_\eps(P,Q))$.

\begin{obs}\label{obs:opt}
The set $\mathcal{S}=\opt_x\cup \opt_y$ is a minimum set of axis-parallel segments 
such that $\proj_x(\bigcup \mathcal{S})=\proj_x(F_\eps(P,Q))$, and $\proj_y(\bigcup \mathcal{S})=\proj_y(F_\eps(P,Q))$.
\end{obs}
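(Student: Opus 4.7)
The plan is to verify two claims in sequence: that $\mathcal{S} = \opt_x \cup \opt_y$ is a valid cover (satisfies both projection conditions), and that no strictly smaller axis-parallel cover exists. Validity is immediate from the definitions: every segment of $\opt_x$ lies in $F_\eps(P,Q)$, so $\proj_x(\bigcup \mathcal{S}) \supseteq \proj_x(\bigcup \opt_x) = \proj_x(F_\eps(P,Q))$, and the reverse inclusion follows from $\mathcal{S} \subset F_\eps(P,Q)$; a symmetric argument handles the $y$-projection.

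For the lower bound, I would take an arbitrary cover $\mathcal{T}$ and partition it into its horizontal part $\mathcal{T}_h$ and vertical part $\mathcal{T}_v$, placing any degenerate single-point segments in $\mathcal{T}_h$ by convention. The key structural fact is that each vertical segment projects to a single point on the $x$-axis, so $\proj_x(\bigcup \mathcal{T}_v)$ is finite, whereas $\proj_x(F_\eps(P,Q))$ is a finite union of closed intervals, one per connected component of the free space's projection. For each maximal interval $I$ of positive length, the relative complement $I \setminus \proj_x(\bigcup \mathcal{T}_h)$ is open in $I$; since it must lie inside the finite set $\proj_x(\bigcup \mathcal{T}_v)$, it is empty. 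Hence $\mathcal{T}_h$ already covers every positive-length component of $\proj_x(F_\eps(P,Q))$, and degenerate horizontal segments in $\mathcal{T}_h$ (or the absence of isolated points in generic position) accommodate the remaining isolated points. Thus $\mathcal{T}_h$ is a valid horizontal $x$-cover and $|\mathcal{T}_h| \geq |\opt_x|$. The symmetric argument gives $|\mathcal{T}_v| \geq |\opt_y|$, and summing yields $|\mathcal{T}| = |\mathcal{T}_h| + |\mathcal{T}_v| \geq |\opt_x| + |\opt_y| = |\mathcal{S}|$.

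The main obstacle I anticipate is bookkeeping around the degenerate case in which an isolated point of $\proj_x(F_\eps(P,Q))$ happens to be covered solely by a vertical segment of $\mathcal{T}$ (and symmetrically on the $y$-axis). This is the only situation in which the clean inequality $|\mathcal{T}_h| \geq |\opt_x|$ can fail in isolation, and I would handle it by either a short swap argument — any such vertical segment can be replaced by a degenerate segment at its midpoint without changing $|\mathcal{T}|$, while letting the opposite axis inherit an extra free degenerate segment at the same location — or by perturbing $\eps$ infinitesimally so that the projections become finite unions of positive-length intervals, in which case the argument above goes through verbatim.
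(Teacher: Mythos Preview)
Your proof is correct and follows the same strategy as the paper: partition an arbitrary axis-parallel cover $\mathcal{T}$ into its horizontal and vertical parts and compare each to $\opt_x$ and $\opt_y$, respectively. You are in fact more thorough than the paper's two-line argument, which tacitly assumes---without the ``vertical segments project to finitely many points, hence the closed union of horizontal projections already covers each interval'' justification you supply---that the horizontal part alone realizes the full $x$-projection; the isolated-point edge case you flag is simply ignored in the paper.
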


\begin{proof}
Suppose $\opt_x\cup \opt_y$ is not minimal, i.e., there exists a smaller such set $\mathcal{S}'$ of axis-parallel segments whose $x$- and $y$-projection equals that of $F_\eps(P,Q)$. Partition $\mathcal{S}'$ into subsets of horizontal and vertical segments, say $\mathcal{S}'_x$ and $\mathcal{S}'_y$. Then $|\mathcal{S}|'<|\opt_x|+|\opt_y|$ implies $|\mathcal{S}'_x|<|\opt_x|$ or $|\mathcal{S}'_y|<|\opt_y|$, contradicting the minimality of $\opt_x$ or $\opt_y$.
\end{proof}

Given a set of axis-parallel line segments, we can eliminate intersections between the relative interiors of their $x$- and $y$-projections at the expense of increasing the number of segments by a factor of at most 2.

\begin{lemma}\label{lem:approx}
There exists a set $\mathcal{S}$ of at most $2(|\opt_x|+|\opt_y|)$
axis-parallel segments in $F_\eps(P,Q)$ such that
$\proj_x(\bigcup \mathcal{S})=\proj_x(F_\eps(P,Q))$,
$\proj_y(\bigcup \mathcal{S})=\proj_y(F_\eps(P,Q))$,
and the projections of the segments onto the two 
coordinate axis have pairwise disjoint relative interiors.
\end{lemma}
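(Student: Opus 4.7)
The plan is to start with the set $\opt_x\cup \opt_y$ from Observation~\ref{obs:opt}, which has $|\opt_x|+|\opt_y|$ axis-parallel segments whose projections cover $\proj_x(F_\eps(P,Q))$ and $\proj_y(F_\eps(P,Q))$, respectively, and then modify it in two stages in order to enforce interior-disjointness of the projections.

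\emph{Stage 1 (trimming within each type).} Work on one connected component of $\proj_x(F_\eps(P,Q))$ at a time. Sort the horizontal segments of $\opt_x$ whose $x$-projections lie in that component by the left endpoint of their $x$-projections. Using the minimality of $\opt_x$, one checks that no $x$-projection is contained in another and that non-consecutive $x$-projections are disjoint (otherwise some segment would be redundant); consequently, consecutive $x$-projections $I_i=[\ell_i,r_i]$ and $I_{i+1}=[\ell_{i+1},r_{i+1}]$ satisfy $\ell_{i+1}\le r_i$. Pick breakpoints $s_i\in[\ell_{i+1},r_i]$ and replace each horizontal by its sub-segment with $x$-projection $J_i=[s_{i-1},s_i]\subseteq I_i$. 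A horizontal sub-segment of a horizontal segment in $F_\eps(P,Q)$ is again in $F_\eps(P,Q)$, the $J_i$ tile the component, and the number of segments is unchanged. Perform the symmetric trimming on $\opt_y$.

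\emph{Stage 2 (resolving cross-axis conflicts).} After Stage 1, the $x$-coordinate $a$ of a vertical segment may still lie in the relative interior of the $x$-projection of some horizontal segment $H$. If so, split $H$ at $x=a$ into two horizontal sub-segments; both remain in $F_\eps(P,Q)$. Because the horizontals have interior-disjoint $x$-projections, each vertical triggers at most one such split, contributing at most $|\opt_y|$ new horizontal segments. Symmetrically, split each vertical whose $y$-projection has a horizontal's $y$-coordinate in its relative interior. Splitting a horizontal preserves its $y$-coordinate, so the set of $y$-coordinates of horizontals does not grow during Stage~2; there are at most $|\opt_x|$ such distinct $y$-coordinates, each triggering at most one vertical split, contributing at most $|\opt_x|$ new vertical segments.

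In total, $\mathcal{S}$ consists of at most $(|\opt_x|+|\opt_y|)+(|\opt_y|+|\opt_x|)=2(|\opt_x|+|\opt_y|)$ axis-parallel segments in $F_\eps(P,Q)$ whose projections onto each coordinate axis cover those of $F_\eps(P,Q)$ (trimming and splitting preserve the unions of projections) and have pairwise disjoint relative interiors. The main obstacle is ruling out a cascade of splits in Stage~2: one might worry that splitting a horizontal creates a new breakpoint that forces an additional vertical split, and so on. The key observation that blocks this is that splitting preserves the coordinate being projected to the ``other'' axis; in particular, the multiset of $y$-coordinates of horizontal segments and the multiset of $x$-coordinates of vertical segments stay fixed during Stage~2, which yields the claimed bound.
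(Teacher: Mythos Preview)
Your proof is correct and follows essentially the same approach as the paper: first truncate the segments in $\opt_x$ and $\opt_y$ so that same-axis projections become interior-disjoint, then split each segment at the supporting lines of the opposite-axis segments, using the interior-disjointness from the first stage to bound the number of cuts by $|\opt_x|+|\opt_y|$ on each side. The paper phrases Stage~2 as a single subdivision by all supporting lines at once (so the cascade issue never arises), whereas you do it iteratively and explicitly argue that the sets of $x$- and $y$-coordinates are invariant under splitting; the content is the same.
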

\begin{proof}
We may assume, by truncating the segments in $\opt_x$ and $\opt_y$, if necessary, that the $x$-projections of segments in $\opt_x$ are interior-disjoint, and the $y$-projections of segments in $\opt_y$ are also interior-disjoint. Then the supporting line of each horizontal segment in $\opt_x$ intersects the interior of at most one vertical segment in $\opt_y$, and vice versa.
Consequently, the supporting lines of the segments in $\opt_x$ (resp., $\opt_y$) 
jointly subdivide the segments in $\opt_y$ (resp., $\opt_x$) into at most 
$|\opt_x|+|\opt_y|$ pieces. The total number of resulting axis-parallel segments is $2(|\opt_x|+|\opt_y|)$, as required.
\end{proof}

It remains to show how to compute $\opt_x$ and $\opt_y$ efficiently. 
We first observe that a greedy strategy finds $\opt_x$ (resp., $\opt_y$) from a set of maximal horizontal (resp., vertical) segments in $F_\eps(P,Q)$.

\paragraph{A Greedy Strategy.} 
Input: A set $\mathcal{H}$ of horizontal line segments in $\mathbb{R}^2$. 
Output: a subset $\mathcal{S}\subset \mathcal{H}$ such that $\proj_x(\bigcup \mathcal{S})=\proj_x(\bigcup \mathcal{H})$. Initialize $\mathcal{S}:=\emptyset$;
and let $L$ be a vertical line through the leftmost points in $\bigcup \mathcal{H}$.
Let $L^-$ be the closed halfplane on the left of $L$. 
While $\proj_x(\bigcup \mathcal{S})\neq \proj_x(\bigcup \mathcal{H})$, do:
Let $s\in \mathcal{H}$ be a segment whose left endpoint is in $L^-$ and whose right endpoint has maximal $x$-coordinate. Put $\mathcal{S}\leftarrow \mathcal{S}\cup \{s\}$; let $L\leftarrow$the vertical line through the right endpoint of $s$, and $\mathcal{H}\leftarrow \{h\in \mathcal{H}: h\not\subset L^-\}$.

\begin{obs}
Given a set $\mathcal{H}$ of horizontal segments, the above greedy algorithm returns a minimum subset $\mathcal{S}\subset \mathcal{H}$ such that $\proj_x(\bigcup \mathcal{S})=\proj_x(\bigcup \mathcal{H})$.
\end{obs}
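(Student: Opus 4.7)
The plan is to reformulate the problem as the classical minimum interval cover: project every $h\in\mathcal{H}$ to $I_h=\proj_x(h)$ and let $T=\bigcup_{h\in\mathcal{H}} I_h$. The task reduces to picking the smallest $\mathcal{S}\subseteq\mathcal{H}$ with $\bigcup_{h\in\mathcal{S}} I_h=T$. Since $T$ is a finite disjoint union of closed intervals and each $I_h$ lies in a single connected component, I would first split into components: both the greedy and any optimal solution decouple across components, so it suffices to treat a single component $T=[a,b]$.

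For a single component, I would run the standard exchange argument. Let $s_1,\ldots,s_m$ be the greedy choices in order, with right $x$-coordinates $a\le r_1<r_2<\cdots<r_m=b$, and let $s_1^*,\ldots,s_k^*$ be any optimal solution, sorted by left endpoint, with right $x$-coordinates $r_i^*$. The key claim is the invariant
\[
r_i \ge r_i^* \qquad\text{for } i=1,\ldots,\min(m,k),
\]
proved by induction on $i$. The base case uses that both $s_1$ and $s_1^*$ must cover $a$, so $s_1^*$ lies in the candidate pool at step $1$ of the greedy and the greedy maximizes the right endpoint among candidates. For the inductive step, since $\mathcal{S}^*$ covers $[a,b]$ without gaps, the left endpoint of $s_i^*$ is at most $r_{i-1}^*\le r_{i-1}$, so $s_i^*$ is again a candidate at step $i$; the greedy's maximizing choice therefore gives $r_i\ge r_i^*$. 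Applying the invariant with $i=k$ yields $r_k\ge r_k^*=b$, so the greedy finishes in at most $k$ steps, i.e.\ $m\le k$; combined with the trivial $m\ge k$ from optimality of $\mathcal{S}^*$, we conclude $m=k$.

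The main subtlety I foresee is not the exchange argument itself, which is routine, but the handling of disconnected components in the greedy as written: if at some iteration no $s\in\mathcal{H}$ with left endpoint in the current $L^-$ has right endpoint strictly beyond $L$, then the line $L$ must be advanced to the leftmost point of the next uncovered component of $T$ before the loop continues. With that minor clarification, progress is guaranteed at every step inside a component (the next uncovered point of $T$ is covered by some $h\in\mathcal{H}$ whose left endpoint lies at or before it), the per-component analysis above applies, and the counts add up across components to yield the claimed global optimality.
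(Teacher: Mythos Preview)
Your argument is correct and is the standard ``greedy stays ahead'' exchange proof for interval cover. The paper's own proof is a single sentence: it asserts the loop invariant that after each iteration the current $\mathcal{S}$ is a minimum-cardinality subset with $\proj_x(\bigcup\mathcal{S})=\proj_x((\bigcup\mathcal{H})\cap L^-)$, and leaves the verification to the reader. So the two approaches differ in the invariant maintained: the paper tracks ``optimal for the prefix covered so far,'' while you track the pointwise inequality $r_i\ge r_i^*$ against a fixed optimal solution. Yours is more explicit and actually carries out the induction; the paper's is terser but requires the reader to supply essentially your argument (or an equivalent one) to justify the invariant. One small wrinkle in your write-up: when you sort the optimal solution by left endpoint and use $\text{left}(s_i^*)\le r_{i-1}^*$, you are implicitly assuming the right endpoints are sorted the same way; this holds once you observe that in a minimum cover no interval contains another, so you may want to state that reduction. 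Your remark about advancing $L$ across gaps between components is a genuine clarification the paper omits.
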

\begin{proof}
At each iteration of the while loop, we maintain the following invariant: 
$\mathcal{S}$ is a minimal subset of $\mathcal{H}$ such that 
$\proj_x(\bigcup \mathcal{S})=\proj_x((\bigcup \mathcal{H})\cap L^-)$.
\end{proof}

The implementation of the above greedy algorithm is straightforward when $\mathcal{H}$ is finite. However, the set $\mathcal{H}$ of maximal horizontal segments in the free space $F_\eps(P,Q)$ may be infinite.

\begin{lemma}\label{lem:optxy}
Let $P$ and $Q$ be polygonal chains with $m$ and $n$ segments, respectively, and let $\eps>0$.
Then a set $\opt_x$ can be computed in output-sensitive $O((|\opt_x|+m)n)$ time.
\end{lemma}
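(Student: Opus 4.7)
The plan is to implement the greedy strategy that precedes the lemma statement; by the earlier observation on interval covering, it is optimal, so only efficiency matters. First I would preprocess the $mn$ cells of $F_\eps(P,Q)$ in $O(mn)$ total time: for each cell $C_{i,j}$, determine the ellipse or strip $E_{i,j}$, the traversal interval $T_{i,j}\subseteq[b_{j-1},b_j]$ of heights $y$ at which the horizontal line $y=y_0$ lies inside $E_{i,j}$ throughout the cell's $x$-range, and the two intervals of heights at which $E_{i,j}$ meets the left and right edges of the cell. Each of these can be computed in $O(1)$ time from the equations of $E_{i,j}$.

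Next, for each row $j$ I would precompute in $O(m)$ time a sequence of candidate horizontal segments that suffices for the greedy. A left-to-right column sweep maintains a valid $y$-interval together with the earliest left endpoint of any horizontal segment at a height in this interval; intersecting with the next cell's traversal and boundary intervals either extends the current segment or, once the intersection becomes empty, closes it (its right endpoint being the rightmost $x$ on the last ellipse restricted to the $y$-interval, found in $O(1)$) and starts a new one from the following cell. This produces $O(m)$ candidates per row, thus $O(mn)$ in total, sorted by left endpoint within each row.

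For the greedy loop I would maintain one monotone pointer per row into its candidate list. At each iteration, with current boundary $L$, I advance each row's pointer past candidates whose right endpoint does not exceed $L$, and among the $n$ head candidates with left endpoint at most $L$, select the one with largest right endpoint; this segment is emitted and $L$ is advanced. Each candidate is consumed at most once, so cumulative pointer work is $O(mn)$; each of the $|\opt_x|$ iterations does $O(n)$ scan work, contributing $O(|\opt_x|\cdot n)$. Summing, the algorithm runs in $O(mn)+O(|\opt_x|\cdot n)=O((|\opt_x|+m)n)$ time, and the segments it outputs form $\opt_x$.

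The step I expect to require the most care is proving that the $O(m)$ candidates per row actually include every segment the greedy might pick. One has to argue that whenever the sweep closes a segment, the recorded right endpoint is exactly the supremum of right endpoints over all maximal horizontal segments of row $j$ whose left endpoint does not exceed the sweep's current one; this should follow from convexity of the cell intersections $C_{i,j}\cap E_{i,j}$ and from the fact that the sweep's $y$-interval tracks precisely the heights still compatible with the current left endpoint, so no wider segment is missed.
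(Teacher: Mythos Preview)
Your approach is structurally different from the paper's: you precompute $O(m)$ candidate segments per row and then run the greedy over those finite lists, whereas the paper answers each greedy query on-line, propagating in every row the interval of admissible $y$-heights from the current line $L$ rightward, column by column, until no row can extend further.

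There is a genuine gap, and it is exactly the step you flag. Your sweep records, for each candidate in row $j$, the rightmost abscissa $r_k$ reachable from a fixed left endpoint $l_k$ chosen by the sweep (the leftmost point of an ellipse, or a cell boundary). But the greedy's query line $L$ is the right endpoint of the \emph{previously} chosen segment, which may come from any row and hence may fall at an arbitrary point strictly between $l_k$ and $l_{k+1}$. For such an $L$ the best right endpoint achievable in row $j$ can strictly exceed $r_k$, because a horizontal segment starting at $L$ may use a height $y$ that is inadmissible at $l_k$ but admits a longer chord. Already a single cell with a tilted ellipse exhibits this: the leftmost point of the ellipse sits at a unique height $y_L$, so your sweep's $y$-interval is the singleton $\{y_L\}$ throughout, and the sole candidate ends at the right end of the chord at height $y_L$, strictly before the ellipse's rightmost abscissa; no further candidate is produced since there is no ``following cell.'' Thus your row-$j$ candidates need not even cover $\proj_x$ of row $j$, and the greedy run over your lists is not guaranteed to output $\opt_x$ (it may not output a cover at all). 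Convexity of $C_{i,j}\cap E_{i,j}$ does not rescue this, because the obstruction is that the relevant reach function $L\mapsto R_j(L)$ varies continuously with $L$ inside a cell, and your discretisation samples it only at the sweep's self-chosen starting points.

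The paper sidesteps the issue by never committing to a left endpoint in advance: for each actual $L$ it computes, in every row, the interval $\{y:(L,y)\in F_\eps(P,Q)\}$ and pushes these intervals rightward cell by cell. Since the successive values of $L$ are increasing, the total number of column steps over all queries telescopes to at most $|\opt_x|+m$, each costing $O(n)$, which yields the stated bound with no per-row precomputation.
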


\begin{proof}
Let $\mathcal{H}$ be the set of maximal horizontal segments in the free space $F_\eps(P,Q)$. 
To implement the greedy algorithm above, we describe a data structure that supports the following query: Given a vertical line $L$, find a segment $s\in \mathcal{H}$ whose left endpoint is in $L^-$ and whose right endpoint has maximal $x$-coordinate. 

Recall from Section~\ref{sec:feasibility} that the parameter space $[0,1]^2$ is subdivided into $mn$ axis-parallel cells $C_{i,j}$. In each cell, $C_{i,j}\cap F_\eps(P,Q)= C_{i,j}\cap E_{i,j}$, where $E_{i,j}$ is either an ellipse or a slab parallel to the diagonal of $C_{i,j}$. 

Let a vertical line $L$ be given, and assume that it intersects the cells $C_{i,j}$, for $j=1,\ldots ,n$. In each of these $n$ cells, compute the intersections $\ell_{i,j}=L\cap C_{i,j}\cap E_{i,j}$, and the set $R_{i,j}$ of points in $C_{i+1,j}\cap E_{i,j}$ that can be connected to $\ell_{i,j}$ by a horizontal line segment within $C_{i,j}\cap E_{i,j}$. 
If none of the sets $R_{i,j}$ touches the right edge of the cell $C_{i,j}$, then 
take a rightmost point $r$ in $\bigcup_{j=1}^n R_{i,j}$, and report a maximal horizontal line segment in $F_\eps(P,Q)$ whose right endpoint is $r$; this takes $O(n)$ time.
Otherwise consider the vertical line $L'$ passing through the right edges of the cells $C_{i,j}$ ($j=1,\ldots ,n$); and let $\ell'_{i,j}=L'\cap R_{i,j}$. 
We can repeat the above process in cells $C_{i+1,j}$ ($j=1,\ldots ,n$) with lines $\ell_{i,j}'$
in place of $\ell_{i,j}$.  Ultimately, we find a rightmost point $r\in F_\eps(P,Q)$ that 
can be connected to a point in $L$ within $F_\eps(P,Q)$. 

Each query $L$ takes $O(n)$ time if it finds $r$ within a cell $C_{i,j}$ stabbed by $L$; and 
$O(nt)$ time if it finds $r$ in a cell $C_{i+t,j}$ for some $t=1,\ldots ,m-i$. 
Since the $x$-coordinates of the query lines are strictly increasing, 
the total running time for $|\opt_x|$ queries is $O((|\opt_x|+m)n)$ time, as claimed.
\end{proof}

\begin{theorem}
Let $P$ and $Q$ be polygonal chains with $m$ and $n$ segments, respectively, and let $\eps>0$.
Then we can approximate the minimum $k$ such that $\station(k,P,Q)\leq \eps$ within a factor of 2 in output-sensitive $O(k(m+n)+mn)$ time.
\end{theorem}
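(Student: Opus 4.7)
The plan is to assemble the theorem directly from the tools built earlier in the section. The algorithm will compute $\opt_x$ and $\opt_y$ separately by the greedy strategy, and then apply the surgery of Lemma~\ref{lem:approx} to turn $\opt_x \cup \opt_y$ into an interior-disjoint family of axis-parallel segments. Its size will be at most $2(|\opt_x|+|\opt_y|)$, and since by Observation~\ref{obs:opt} every feasible solution (even without the interior-disjointness constraint) contains at least $|\opt_x|+|\opt_y|$ segments, in particular the true optimum $k^*$ satisfies $k^* \geq |\opt_x|+|\opt_y|$; thus the algorithm returns $k \leq 2k^*$, a 2-approximation.

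The running time bookkeeping is the one nonroutine step. By Lemma~\ref{lem:optxy}, computing $\opt_x$ takes $O((|\opt_x|+m)n)$ time, and a symmetric argument (swapping the roles of $P$ and $Q$) shows that $\opt_y$ can be computed in $O((|\opt_y|+n)m)$ time. Summing gives $O(|\opt_x|\,n + |\opt_y|\,m + mn)$. Since the returned approximation has $k = 2(|\opt_x|+|\opt_y|)$, we have $|\opt_x|, |\opt_y| \leq k/2$, and therefore the combined cost is $O(k(m+n)+mn)$ as claimed. The surgery in Lemma~\ref{lem:approx} itself costs only $O(|\opt_x|+|\opt_y|) = O(k)$ time once the two optimal covers are in hand, so it is absorbed by the bound.

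The only point requiring care is that Lemma~\ref{lem:optxy} is stated for $\opt_x$ only; I will note explicitly that the corresponding lemma for $\opt_y$ follows by symmetry, since the greedy strategy and the query primitive are symmetric in the two coordinate axes (one simply swaps "horizontal" with "vertical" and $m$ with $n$ throughout the argument). No other obstacle arises: the theorem is essentially the composition of Observation~\ref{obs:opt}, Lemma~\ref{lem:approx}, and Lemma~\ref{lem:optxy}, and the proof will read as a short assembly of these pieces.
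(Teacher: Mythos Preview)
Your proposal is correct and follows essentially the same approach as the paper: compute $\opt_x$ and $\opt_y$ via Lemma~\ref{lem:optxy} (and its symmetric counterpart), invoke Observation~\ref{obs:opt} for the lower bound, apply Lemma~\ref{lem:approx} for the interior-disjoint surgery, and bound the running time as $O(|\opt_x|\,n+|\opt_y|\,m+mn)\subseteq O(k(m+n)+mn)$. The only small omission is the preliminary step of building the free space $F_\eps(P,Q)$ in $O(mn)$ time and checking that both projections equal $[0,1]$ (otherwise no $k$ works); this is trivial and already absorbed by the $O(mn)$ term.
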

\begin{proof}
Compute the free space $F_\eps(P,Q)$ in $O(mn)$ time. If $\proj_x(F_\eps(P,Q))\neq [0,1]$ or $\proj_y(F_\eps(P,Q))\neq [0,1]$, then report that $\station(k,P,Q)> \eps$ for every $k\in \mathbb{N}$. Otherwise, compute $\opt_x$ and $\opt_y$ by Lemma~\ref{lem:optxy}
in $O(mn+|\opt_x|n+m|\opt_y|)$ time. 
We have $k\leq |\opt_x|+|\opt_y|$ by Observation~\ref{obs:opt}.
Lemma~\ref{lem:approx} yields a set $\mathcal{S}$ of at most $2(|\opt{S}_x|+|\opt{S}_y|)$
axis-parallel segments in $F_\eps(P,Q)$ such that
$\proj_x(\bigcup \mathcal{S})=\proj_x(F_\eps(P,Q))$,
$\proj_y(\bigcup \mathcal{S})=\proj_y(F_\eps(P,Q))$,
and the projections of the segments onto the two 
coordinate axes have pairwise disjoint relative interiors.
In particular, $\station(|\mathcal{S}|,P,Q)\leq \eps$,
and so $k\leq |\mathcal{S}|\leq 2k$, as required. 
The running time of our algorithm is 
 $O(mn+|\opt_x|n+m|\opt_y|)\subset O(mn+k(m+n))$.
\end{proof} 

\section{Conclusion}
\label{sec:con}

We have introduced the rock climber distance $\rock(k,P,Q)$ and the $k$-station distance $\station(k,P,Q)$ between two polygonal chains in the plane. The rock climber distance combines properties of the continuous and discrete Fr\'echet distance: It corresponds to a coordinated motion of two agents traversing the two chains where only one agent moves at a time.
Our results raise several open problems, we present some of them here. 
\begin{itemize}
    \item Can we efficiently approximate $\station(k,P,Q)$ for a given $k$
    %$k\in \mathbb{N}$ 
    and given polygonal chains $P$ and $Q$?
    \item In Section~\ref{sec:apx}, we described a 2-approximation algorithm for finding the minimum $k$ for which $\station(k,P,Q)\leq \eps$. Can the approximation ratio be improved? Does the problem admit a PTAS?
    	\begin{figure}[ht]
		\centering
		\includegraphics[scale=0.8]{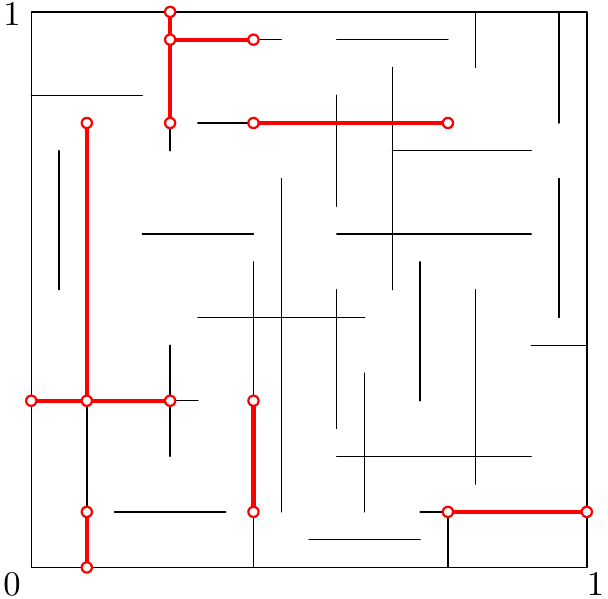}
		\caption{An instance of the compatible axis-parallel segment cover problem. A solution of size 10, %using 5 horizontal and 5 vertical segments, 
		is shown in red (bold).}\label{fig:pseudo}
	\end{figure}
    \item A discretization of the previous problem leads to the \emph{compatible axis-parallel segment cover problem}: Instead of the free~space $F_\eps(P,Q)$, we are given a set $F\subset [0,1]^2$ as a union of $n$  axis-aligned line segments, and ask for the minimum $k\in \mathbb{N}$ such that $F$ contains $k$ axis-parallel line segments whose vertical and horizontal projections, respectively, have pairwise disjoint relative interiors, and jointly cover the unit interval $[0,1]$. See Fig.~\ref{fig:pseudo}. % for an example. 
    The conditions on disjoint relative interiors is crucial,
    and can be formulated as a geometric set cover problem  with conflicts~\cite{GeomSetCover-Conflicts-17}, or with unique coverage~\cite{UniqueGeomCover-15,AshokRG17}.
    %(otherwise a greedy algorithm would solve the problem).
     Our NP-hardness and 2-approximation results extend to this problem. 
     Can the approximation ratio be improved? 
     Is the problem APX-hard?
     %Does the problem admit a PTAS?

 %   \item If the rock climbers can attach safety ropes to $k$ points from a finite set of possible points (anchors) in the wall, and wish to minimize the length of the ropes, we arrive at the following problem: Given a set $A\subset \mathbb{R}^2$ of $n$ \emph{anchor} points in the plane, a source $s$, a target $t$, and an integer $k\in \mathbb{N}$, find the minimum radius $r>0$ so that  there exists a subset $C\subset A$ of $k$ anchors such that both $s$ and $t$ lie in a connected component of the $r$-neighborhood $N_r(C)$ of $C$. 
 %   \item \csaba{Add your problem here.}
\end{itemize}

\bibliography{bibliography}

\end{document}